\title[sketching]{Sketching, Embedding, and Dimensionality
Reduction for Information Spaces\thanks{This research was funded in part by the
  NSF under grants CCF-0953066, CCF-0953754, CCF-1320719 and BIGDATA-1251049 and
by a Google Faculty Research Award.}}
\author[Abdullah et al.]{\Name{Amirali Abdullah} \Email{amirali@cs.utah.edu} \\ \addr University of Utah \AND \Name{Ravi Kumar}
  \Email{tintin@google.com} \\ \addr Google, Inc. \AND \Name{Andrew McGregor}
  \Email{mcgregor@cs.umass.edu} \\ \addr University of Massachusetts--Amherst
  \AND \Name{Sergei Vassilvitskii} \Email{sergeiv@google.com} \\ \addr Google,
  Inc. \AND \Name{Suresh Venkatasubramanian} \Email{suresh@cs.utah.edu} \\ \addr
University of Utah}
\newtheorem*{theorem*}{Theorem}
\newtheorem{claim}{Claim}[section]
\numberwithin{equation}{section}
\newcommand{\js}{\ensuremath{{\text{JS}}}}
\newcommand{\eps}{\varepsilon}
\newcommand{\reals}{{\mathbb R}}
\newcommand{\he}{\ensuremath{\text{He}}}
\newcommand{\w}{\omega}
\newcommand{\vecv}{{\textbf v}}
\newcommand{\vecu}{{\textbf u}}
\newcommand{\veca}{{\textbf a}}
\newcommand*\diff{\mathop{}\!\mathrm{d}}
\newcommand{\var}{\mathop{\rm var}}
\newcommand{\roundup}[1]{\lceil #1 \rceil}
\DeclareMathOperator{\sech}{sech}
\begin{document}
\maketitle
\begin{abstract}

\emph{Information distances} like the Hellinger distance and the Jensen-Shannon
divergence have deep roots in information theory and machine learning. They are
used extensively in data analysis especially when the objects being compared  are
high dimensional empirical probability distributions built from data. However, we lack common tools needed to actually use information distances in applications efficiently
and at scale with any kind of provable guarantees. We can't sketch these
distances easily, or embed them in better behaved spaces, or even reduce the
dimensionality of the space while maintaining the probability structure of the data. 

In this paper, we build these tools for information distances---both for the Hellinger distance and Jensen--Shannon divergence, as well as related measures, like the $\chi^2$ divergence. We first show that they can be sketched efficiently (i.e. up to multiplicative error in sublinear space) in the \emph{aggregate} streaming model. This result is exponentially stronger than known upper bounds for sketching these distances in the \emph{strict turnstile} streaming model. Second, we show a finite dimensionality embedding result for the Jensen-Shannon and $\chi^2$ divergences that preserves pair wise distances. Finally we prove
a dimensionality reduction result for the Hellinger, Jensen--Shannon, and
$\chi^2$ divergences that preserves the information geometry of the
distributions (specifically, by retaining the simplex structure of the
space). While our second result above already implies that these divergences can
be explicitly embedded in Euclidean space, retaining the simplex structure is
important because it allows us to continue doing inference in the reduced
space. In essence, we preserve not just the distance structure but the
underlying geometry of the space. 

% We believe that these tools will open up a number of data analysis applications
\end{abstract}

\section{Introduction}
\label{sec:intro}

The space of \emph{information distances} includes many distances that are used
extensively in data analysis. These include the well-known Bregman divergences, the $\alpha$-divergences, and the $f$-divergences. In this work we focus on a subclass of the $f$-divergences that admit embeddings into some (possibly infinite-dimensional) Hilbert space, with a specific emphasis on the JS divergence. These divergences are used in statistical tests and estimators~\citep{hellingerestimate}, as well as in image analysis~\citep{imageanalysis1},
computer vision~\citep{vision1,shapematch1}, and text
analysis~\citep{text1, corpus1}. They were introduced by~\citet{csiszinformation}, and, in the most general case, also include
measures such as the Hellinger, JS, and $\chi^2$ divergences (here we consider a symmetrized variant of the $\chi^2$ distance).

To work with the geometry of these divergences effectively at scale and in high dimensions, we need algorithmic tools that can provide provably high quality approximate representations of the geometry. 
The techniques of \emph{sketching}, \emph{embedding}, and \emph{dimensionality
 reduction} have evolved as ways of dealing with this problem. 
 
A \textbf{sketch} for a set of points with respect to a property $P$ is a function that maps the data to a small summary from which property $P$ can be evaluated, albeit with some approximation error.  Linear sketches are especially useful for estimating a derived property of a data stream in a fast and compact way.\footnote{Indeed \citet*{2014turnstile} show that any optimal one-pass streaming sketch algorithm in the turnstile model can be reduced to a linear sketch with logarithmic space overhead.} 
Complementing sketching, \textbf{embedding} techniques are one to one mappings that transform a collection of points lying in one space $X$ to another (presumably easier) space $Y$, while approximately preserving distances between points. \textbf{Dimensionality reduction} is a special kind of embedding which preserves the structure of the space, while reducing its dimension. These embedding techniques can be used in an almost ``plug-and-play'' fashion to speed up many algorithms in data analysis: for example for near neighbor search (and classification), clustering, and closest pair calculations.

Unfortunately, while these tools have been well developed for norms like $\ell_1$ and $\ell_2$, we lack such tools for information distances. This is not just a theoretical concern: information distances are semantically more suited to many tasks in machine learning, and building the appropriate algorithmic toolkit to manipulate them efficiently would expand greatly the places where they can be used. 

\subsection{Our contributions}
\label{sec:our-contributions}

\paragraph*{Sketching information divergences.}
\citet*{GIM} proved an impossibility result, showing that a large class of information divergences cannot be sketched in sublinear space, even if we allow for constant factor approximations. This result holds in the \emph{strict turnstile streaming  model}---a model in which coordinates of two points $x$, $y \subset \Delta_d$ are increased incrementally and we wish to maintain an estimate of the divergence between them.  They left open the question of whether these divergences can be sketched in the \emph{aggregate} streaming model, where each element of the stream gives the $i$th coordinate of $x$ or $y$ in its entirety, but the coordinates may appear in an arbitrary order. We answer this in the affirmative for two important information distances, namely, the Jensen--Shannon and $\chi^2$ divergences.  

\begin{theorem}
A set of points $P$ under the Jensen--Shannon(\js) or $\chi^2$ divergence can be deterministically embedded into $O(\frac{d^2}{\eps} \log \frac{d}{\eps})$ dimensions under $\ell^2_2$ with $\eps$ additive error.  The same space bound holds when sketching $\js$ or $\chi^2$ in the \emph{aggregate} stream model.
\end{theorem}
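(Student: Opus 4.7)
The plan is to construct a coordinate-decomposable embedding and discretize it deterministically; this simultaneously yields the offline embedding and the aggregate-model sketch. Both divergences split as $D(p,q) = \sum_{i=1}^{d} f_D(p_i,q_i)$, with $f_{\js}(a,b) = \tfrac12 a \log \tfrac{2a}{a+b} + \tfrac12 b \log \tfrac{2b}{a+b}$ and $f_{\chi^2}(a,b) = (a-b)^2/(a+b)$. So it suffices to produce, for each $D$, a single-variable feature map $\tilde\psi:[0,1]\to\reals^k$ with $|\|\tilde\psi(a)-\tilde\psi(b)\|_2^2 - f_D(a,b)| \le \eps/d$ uniformly in $a,b\in[0,1]$. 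Concatenating $d$ copies gives $\phi(p) := (\tilde\psi(p_1),\ldots,\tilde\psi(p_d))\in\reals^{dk}$ satisfying $|\|\phi(p)-\phi(q)\|_2^2 - D(p,q)|\le \eps$ by summing per-coordinate errors; choosing $k = O((d/\eps)\log(d/\eps))$ then yields the claimed $O((d^2/\eps)\log(d/\eps))$ total dimension.

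To build $\tilde\psi$, I would start from the isometric Hilbert-space representations these divergences are known to admit, writing $f_D(a,b) = \int (h(a,t)-h(b,t))^2 \, d\mu(t)$ for an explicit integrand $h$ (Fuglede--Topsoe for $\js$; for $\chi^2$ via identities such as $1/(a+b) = \int_0^\infty e^{-t(a+b)}\,dt$). I would then discretize the integral by a deterministic quadrature at nodes $t_1,\ldots,t_k$ with weights $w_1,\ldots,w_k$, setting $\tilde\psi(a)_j = \sqrt{w_j}\,h(a,t_j)$. A dyadic (multiplicative) placement of nodes is needed to capture both the log-like behavior of $f_{\js}$ and the $1/(a+b)$ singularity of $f_{\chi^2}$; the node count decomposes into roughly $O(d/\eps)$ within-scale precision times $O(\log(d/\eps))$ dyadic scales, giving the claimed $k$.

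For the aggregate stream model, the construction is already streaming-friendly: because $\phi$ is a direct sum of per-coordinate blocks, whenever the stream reveals the full value of $p_i$ (or $q_i$) we compute the $k$-vector $\tilde\psi(p_i)$ once and write it into the $i$-th block; no block is ever revisited later, so the total space is exactly $dk = O((d^2/\eps)\log(d/\eps))$ numbers, and at the end $\|\phi(p)-\phi(q)\|_2^2$ estimates $D(p,q)$ to within additive $\eps$.

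The main obstacle I anticipate is bounding the quadrature error uniformly over $[0,1]^2$ near $a+b\to 0$, where $f_D$ has a mild singularity (logarithmic for $\js$, a pole for $\chi^2$). My plan is to split the $d$ coordinates into ``small'' ones with $p_i+q_i \le \eps/d^2$ and ``bulk'' ones: the small coordinates collectively contribute at most $\sum_i (p_i+q_i)\le 2$ times a uniform per-coordinate upper bound and can be absorbed into the $\eps$ slack, while on the bulk the integrand is Lipschitz on a logarithmic scale so that the geometric-grid quadrature attains the stated per-coordinate error.
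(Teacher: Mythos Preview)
Your high-level architecture matches the paper exactly: per-coordinate feature map $\tilde\psi:[0,1]\to\reals^k$ with $\eps/d$ uniform error, then concatenate over the $d$ coordinates; the aggregate-stream argument is also the same. Where you diverge is in the choice of kernel and the discretization strategy, and this is where the paper's route is substantially cleaner than what you propose.

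The paper does not use a Laplace-type identity like $1/(a+b)=\int_0^\infty e^{-t(a+b)}\,dt$ (which, incidentally, does not obviously factor as $\int (h(a,t)-h(b,t))^2\,d\mu(t)$, so your $\chi^2$ feature map is not yet defined). Instead it uses the explicit Vedaldi--Zisserman kernel: for both $\js$ and $\chi^2$ one has
\[
f_D(x,y)=\int_{-\infty}^{\infty} h(x,y,\omega)\,\kappa_D(\omega)\,d\omega,
\qquad
h(x,y,\omega)=\bigl|\sqrt{x}\,e^{i\omega\ln x}-\sqrt{y}\,e^{i\omega\ln y}\bigr|^2,
\]
with $\kappa_D$ a probability density that differs between the two divergences but in both cases has $e^{-\pi|\omega|}$ tails. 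The point is that this particular $h$ satisfies $0\le h(x,y,\omega)\le x+y\le 2$ and $|\partial h/\partial\omega|\le 16$ \emph{uniformly} over $x,y\in[0,1]$ and $\omega\in\reals$. Consequently a plain truncation at $|\omega|\le t=\ln(O(1)/\eps)$ loses at most $\eps$ (exponential tails of $\kappa_D$ against a bounded integrand), and a uniform grid of spacing $\eps/16$ on $[-t,t]$ loses another $\eps$ (Lipschitz $h$ against a probability measure). That gives $k=O(\eps^{-1}\log(1/\eps))$ directly; replacing $\eps$ by $\eps/(2d)$ yields the stated bound.

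In particular, the obstacle you anticipate --- a singularity near $a+b\to 0$ forcing dyadic scales and a small/bulk split --- never materializes with this kernel: since $|\sqrt{x}\,e^{i\omega\ln x}|=\sqrt{x}\to 0$, the feature map extends continuously to $x=0$ and $h$ stays bounded. Your dyadic scheme may well be made to work, but it is fighting a difficulty that the right kernel choice removes, and the per-scale/scale-count accounting you sketch would still need to be carried out. The paper's two lemmas (truncation and quantization) replace all of that with two one-line estimates.
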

\begin{corollary}
Assuming polynomial precision, an AMS sketch for Euclidean distance can reduce the dimension to $O \left(\frac{1}{\eps^2}
\log \frac{1}{\eps}  \log d  \right)$ for a $(1+\eps)$ multiplicative approximation in the aggregate stream setting.
\end{corollary}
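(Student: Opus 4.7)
The plan is to compose two linear sketches. By the theorem, there is a linear embedding $\phi : \Delta_d \to \reals^D$ with $D = O(d^2 \eps^{-1} \log(d/\eps))$ such that $\|\phi(p) - \phi(q)\|_2^2$ approximates $\js(p,q)$ or $\chi^2(p,q)$ with additive error $\eps$. Since $\phi$ is linear in the coordinates of the input, it can be maintained incrementally under the aggregate stream model, simply by accumulating the contribution of each coordinate to $\phi$ as it arrives.

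On top of $\phi$, I would apply an AMS sketch $S : \reals^D \to \reals^k$ for Euclidean distance. AMS is a linear map that, using $k = O(\eps^{-2} \log(1/\delta))$ coordinates, preserves the $\ell_2$ norm of any fixed vector up to a $(1+\eps)$ factor with probability $1-\delta$. Both $\phi$ and $S$ are linear, so the composition $S \circ \phi$ is a linear sketch that can be updated in the aggregate stream. Taking $\delta = 1/\mathrm{poly}(d)$ (enough for a union bound over the $\mathrm{poly}(d)$ stream prefixes of interest under polynomial precision) and noting that $\log D = O(\log d + \log \eps^{-1})$, the sketch occupies $O(\eps^{-2} \log(1/\eps) \log d)$ coordinates overall.

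The main obstacle is reconciling the $\eps$ additive error from $\phi$ with the $(1+\eps)$ multiplicative error from AMS, and this is precisely where the polynomial-precision hypothesis is used. Under polynomial precision, any non-trivial divergence is at least $1/\mathrm{poly}(d)$, so rescaling the accuracy parameter of the embedding to $\eps/\mathrm{poly}(d)$ converts its additive guarantee into a $(1+\eps)$ multiplicative one on all distances of interest. This rescaling inflates $D$ by a $\mathrm{poly}(d)$ factor but only adds $O(\log d)$ to $\log D$, which is absorbed into the final bound. The care needed is in verifying that this rescaling is the only price paid, so that the AMS stage, applied to the rescaled embedding, still gives a clean $(1+\eps)$ multiplicative approximation in the claimed dimension.
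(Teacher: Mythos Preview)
Your overall strategy---apply the deterministic embedding of the theorem, compose with an AMS $\ell_2$ sketch, and invoke polynomial precision to upgrade the additive guarantee to a multiplicative one---is exactly what the paper does (the paper simply asserts that the corollary ``immediately implies'' from Theorem~\ref{thm:jsmain}, the AMS sketch, and the precision assumption). So at the level of architecture your proposal matches the paper.

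There is, however, a genuine error in your justification. You write that ``there is a \emph{linear} embedding $\phi$'' and that ``both $\phi$ and $S$ are linear, so the composition $S\circ\phi$ is a linear sketch.'' This is false: inspecting Algorithm~\ref{algo:embeda}, the coordinates of $\phi(p)$ are of the form $\sqrt{p_i}\cos(\omega_j\ln p_i)\cdot c_j$ and $\sqrt{p_i}\sin(\omega_j\ln p_i)\cdot c_j$, which are highly nonlinear in $p_i$. Linearity is precisely what would be needed in the \emph{turnstile} model, and indeed the paper's whole point (echoing \cite{GIM}) is that no such linear sketch can exist there. What makes the construction work in the \emph{aggregate} model is a different structural property: $\phi$ is \emph{coordinate-separable}, meaning $\phi(p)$ is a concatenation of blocks $\phi_{p_1},\ldots,\phi_{p_d}$ where the $i$th block depends only on $p_i$. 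In the aggregate model each $p_i$ arrives in its entirety, so upon its arrival one can compute the entire block $\phi_{p_i}$ exactly and feed it into the (genuinely linear) AMS sketch $S$. Thus $S(\phi(p))$ is maintainable not because $S\circ\phi$ is linear, but because $\phi$ is coordinate-wise and $S$ is linear.

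Your treatment of the polynomial-precision step---rescaling the embedding's additive accuracy to $\eps/\mathrm{poly}(d)$ so that it becomes $(1+\eps)$-multiplicative on any nonzero divergence, at the cost of only an additive $O(\log d)$ in $\log D$---is correct and is the mechanism the paper has in mind.
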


\begin{theorem}
A set of points $P$ under the JS or $\chi^2$ divergence can be embedded into $\ell_2^{\bar{d}}$ with $\bar{d} = O \left(\frac{ n^2d^3}{\eps^2} \right)$ with $(1 + \eps)$ multiplicative error.  %A space bound of $\frac{9 d^3}{\eps^2}$ holds when sketching $\js$ in a relaxed \emph{aggregate} stream model, where we assume that we can sample from a probability distribution with infinite precision.
\end{theorem}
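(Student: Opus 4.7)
The plan is to lift the additive-error embedding of the preceding theorem to a $(1\pm\eps)$-multiplicative one by calibrating its precision parameter to the smallest positive pairwise $\js$-divergence in $P$. Write $f \colon \Delta_d \to \mathbb{R}^k$ for the additive embedding at precision $\eps_0$, with $k = O\!\bigl(d^2\eps_0^{-1}\log(d/\eps_0)\bigr)$ and $\bigl|\|f(p)-f(q)\|_2^2 - \js(p,q)\bigr| \le \eps_0$ for all $p,q \in \Delta_d$.

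First I would set $\delta := \min\{\js(p_i,p_j) : p_i,p_j \in P,\ p_i \neq p_j\}$ and choose $\eps_0 = \eps\,\delta$. Then on every pair of distinct points of $P$ the additive slack becomes a multiplicative slack:
\[
\bigl|\|f(p_i)-f(p_j)\|_2^2 - \js(p_i,p_j)\bigr| \;\le\; \eps\,\delta \;\le\; \eps\,\js(p_i,p_j),
\]
while identical pairs are preserved trivially. What remains is purely a bookkeeping question: I need $\delta$ bounded below by $\Omega(\eps/(n^2 d))$ so that substitution yields $k = O(n^2 d^3/\eps^2)$ after absorbing logarithmic factors.

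Second, I would obtain that lower bound under the polynomial-precision assumption invoked in the preceding corollary (each coordinate of each $p_i$ is a rational with $\mathrm{poly}(n,d,1/\eps)$ bits). A natural route is the Taylor expansion $\js(p,q) = \tfrac{1}{8}\chi^2(p,q) + O\!\bigl(\chi^2(p,q)\,\|p-q\|_\infty\bigr)$ near the diagonal, reducing the task to a lower bound on $\chi^2$ between distinct rational simplex points, where a $1/\mathrm{poly}$ coordinate gap forces $\chi^2(p_i,p_j) = \Omega(1/\mathrm{poly}(n,d,1/\eps))$. The symmetrized $\chi^2$ case follows by the same argument (or directly from the expansion, since $\chi^2$ already appears as its leading term).

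The main obstacle is precisely this lower bound on $\delta$: absent a precision assumption, $\js$ vanishes continuously as $p\to q$ and $\delta$ can be arbitrarily small, so no uniform bound exists in the worst case. A clean workaround is to preprocess $P$ by collapsing any two points whose $\js$-divergence falls below a threshold $\tau = \mathrm{poly}(\eps/(nd))$ into a single representative; this perturbs each pairwise $\js$ value by $O(\tau)$, is absorbed into the final $(1+\eps)$ guarantee, and reduces the problem to the bounded-$\delta$ case above, where the calibration and dimension count go through verbatim.
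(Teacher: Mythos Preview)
Your route has a real gap, and it is not the one the paper takes.

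The paper does \emph{not} bootstrap the additive embedding. Its multiplicative result is a separate, randomized construction: sample $\omega_1,\dots,\omega_t$ i.i.d.\ from the density $\kappa$, and for each coordinate $p_i$ record the pair $\bigl(\sqrt{p_i}\cos(\omega_j\ln p_i),\,\sqrt{p_i}\sin(\omega_j\ln p_i)\bigr)$. The squared $\ell_2$ difference on the block for coordinate $i$ is then an average of $t$ i.i.d.\ copies of $T:=h(p_i,q_i,\omega)$, with $E[T]=f_J(p_i,q_i)$. The crucial step (Claims~4.1 and~4.2 in the paper) is the relative-variance bound $\var[T]\le C\,(E[T])^2$ with an absolute constant $C$, obtained via $h(x,y,\omega)\le f_H(x,y)(1+2|\omega|)^2$ and $f_H\le 2f_J$. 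Chebyshev then gives a $(1\pm\eps)$-multiplicative guarantee per coordinate with $t=O(n^2d^2/\eps^2)$ samples, and concatenating over $d$ coordinates yields the stated $O(n^2d^3/\eps^2)$. No lower bound on pairwise distances is ever needed, because the concentration is already multiplicative at the level of a single coordinate.

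Your plan, by contrast, hinges on $\delta=\min_{i\ne j}\js(p_i,p_j)$ being $\Omega(\eps/(n^2d))$, and neither justification holds. The polynomial-precision assumption is a streaming convention invoked for the corollary, not a hypothesis of this theorem; without it $\delta$ can be arbitrarily small. The collapsing workaround also fails for multiplicative error: $\sqrt{\js}$ is the metric, so replacing $p_j$ by a $\tau$-close $p_i$ perturbs $\js(p_j,p_k)$ by up to order $\sqrt{\tau\cdot\js(p_i,p_k)}$, not $O(\tau)$. When $\js(p_i,p_k)$ is itself just above the threshold $\tau$, this perturbation is comparable to the distance, not an $\eps$-fraction of it, so the $(1+\eps)$ guarantee is lost. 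Even granting your $\delta$-bound, plugging $\eps_0=\eps\delta$ into $O(d^2\eps_0^{-1}\log(d/\eps_0))$ leaves an unavoidable $\log(nd/\eps)$ factor that the theorem does not have; ``absorbing logarithmic factors'' is not legitimate here. The missing idea is the relative-variance estimate for the kernel integrand, which is what makes a direct multiplicative bound possible without any precision assumption.
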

For the both techniques, applying the Euclidean JL--Lemma can further reduce the dimension to $O \left(\frac{\log n}{\eps^2} \right)$ in the offline setting. 
%\begin{problem}
%  Can the Jensen--Shannon(JS) divergence between two points $x$ and $y$ be sketched upto a $(1+\eps)$ multiplicative approximation in sublinear space in the aggregate model? 
%\end{problem}

%The JS divergence is known to be within a constant factor of the Hellinger
%distance, which itself admits a trivial isometric (but nonlinear) embedding into
%$\ell_2^2$. Therefore the JS divergence is sketchable in the aggregate model
%with constant multiplicative error and $O(\frac{1}{\eps^2} \log \frac{1}{\eps} \log n$ space. The key question here is what bounds may be obtained on a sketch size when we seek a $1+\eps$- multiplicative error.

\paragraph{Dimensionality reduction.}
\label{sec:struct-pres-dimens}
%For the purposes of our paper, we focus on two core problems in this domain. First we wish to characterize information distances that may be embedded within a $1+\eps$-multiplicative error into a \emph{finite dimensional}\footnote{While it is well known that the JS divergence embeds into an infinite dimensional $\ell_2^2$ space~\citep{1365067}, it was unknown before our work how to obtain a \emph{finite dimensional} representation 
%except under a weaker notion of additive error.} $\ell_2^2$ representation, a well studied and tractable space for proximity and classification problems. Note for instance that an embedding into a finite dimensional $\ell_2^2$ immediately implies the dimension can be further reduced to be logarithmic in the input size by the Euclidean JL Lemma.  
%
%\begin{problem}
%  Can the JS divergence of a finite set of points $P$ be embedded upto $(1+\eps)$ multiplicative error into a finite dimensional space $\reals^{d}$ under $\ell_2^2$? 
%\end{problem}
We then turn to the more challenging case of performing dimensionality reduction for information distances, where 
we wish to preserve not only the distances between pairs of points (distributions),
but also the underlying simplicial structure of the space, so that we can
continue to interpret coordinates in the new space as probabilities. This notion
of a \emph{structure-preserving} dimensionality reduction is implicit when
dealing with normed spaces (since we always map a normed space to another), but
requires an explicit mapping when dealing with more structured
spaces.  We prove an analog of the classical JL--Lemma : 

\begin{theorem}
For the Jenson-Shannon, Hellinger, and $\chi^2$ divergences, there exists a structure preserving dimensionality reduction from the high dimensional simplex $\Delta_d$ to a low dimensional simplex $\Delta_k$, where $k = O((\log n)/\eps^2)$. 
\end{theorem}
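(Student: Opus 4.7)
My plan is to build a randomized structure-preserving map $\Phi\colon\Delta_d\to\Delta_k$ with $k = O(\log n/\eps^2)$ by composing the isometric square-root embedding with a Johnson--Lindenstrauss projection and then returning to the simplex via coordinate-wise squaring. For Hellinger I would use $\phi(p) = (\sqrt{p_1},\dots,\sqrt{p_d})$, which sends $\Delta_d$ onto the positive orthant of the unit sphere in $\reals^d$ with $\he^2(p,q) = \|\phi(p)-\phi(q)\|_2^2$; for $\js$ and $\chi^2$ I would use the finite-dimensional Hilbert embedding supplied by the paper's earlier embedding theorem. Either way the image lies on the unit sphere of some finite-dimensional Euclidean space, so each divergence becomes (up to $(1\pm\eps)$) a squared $\ell_2$ distance on a spherical image.

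Next, apply a JL matrix $A\in\reals^{k\times d}$ with $k = O(\log n/\eps^2)$, so that $\|A\phi(p)-A\phi(q)\|_2^2 = (1\pm\eps)\|\phi(p)-\phi(q)\|_2^2$ holds for all $\binom{n}{2}$ pairs and $\|A\phi(p)\|_2^2 = 1\pm\eps$ for every point, simultaneously with constant probability. Set $\Phi(p)_j := (A\phi(p))_j^2 / \|A\phi(p)\|_2^2$, which lies in $\Delta_k$ by construction, giving the structure-preserving property. The Hellinger distortion then reduces to bounding $\bigl\||A\phi(p)|/\|A\phi(p)\|_2 - |A\phi(q)|/\|A\phi(q)\|_2\bigr\|_2^2$ against $\|A\phi(p)-A\phi(q)\|_2^2$; the identity $\|u-v\|_2^2 - \||u|-|v|\|_2^2 = 4\sum_{j\,:\,u_jv_j<0}|u_jv_j|$ isolates the ``sign-mismatch mass'' as the only obstruction, after which a union bound over pairs handles all $n$ points. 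For $\js$ and $\chi^2$ I apply the same recipe on their respective Hilbert embeddings; inequalities like $\he^2/2\le\js\le\he^2\ln 2$ are not useful here since they are only constant-tight and would destroy the $(1\pm\eps)$ factor.

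The main obstacle is controlling the sign-mismatch mass $\sum_{j\,:\,u_jv_j<0}|u_jv_j|$ against $\|u-v\|_2^2$ uniformly. For close pairs this is easy, since the JL images $u=A\phi(p)$ and $v=A\phi(q)$ are close and sign disagreements only occur in coordinates whose absolute values are small, with total contribution $O(\eps)$ by concentration. The hard case is distant pairs, where sign disagreements can occur in a constant fraction of coordinates and a vanilla Gaussian JL projection can distort by more than $\eps$. To handle this I plan to either (i) enlarge $k$ by a constant factor and use a ``signed duplication'' map that splits each projected coordinate into its positive and negative parts before squaring, landing in $\Delta_{2k}$ and avoiding absolute values entirely, or (ii) replace $A$ by a random matrix whose columns lie on $\Delta_k$ and whose concentration on the difference vectors $\phi(p)-\phi(q)$ still mimics JL. Verifying that one of these refinements actually delivers the $(1\pm\eps)$ bound uniformly across all pairs, and for all three divergences simultaneously, is the crux of the argument.
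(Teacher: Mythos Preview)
Your proposal has two genuine gaps, and the paper takes a quite different (and simpler) route.

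\textbf{Gap for $\js$ and $\chi^2$.} Coordinate-wise squaring is specifically the inverse of the Hellinger feature map $p\mapsto\sqrt{p}$, not of the $\js$ or $\chi^2$ kernels. If you embed via $\phi_{\js}$, apply $A$, then square and normalize, the resulting point does lie on $\Delta_k$, but it is the \emph{Hellinger} distance on $\Delta_k$ that tracks $\|A\phi_{\js}(p)-A\phi_{\js}(q)\|_2^2$, not the $\js$ distance on $\Delta_k$. The theorem requires $(1\pm\eps)$ distortion in the \emph{same} divergence on source and target, so ``the same recipe on their respective Hilbert embeddings'' does not deliver $\js\to\js$ or $\chi^2\to\chi^2$.

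\textbf{Gap for Hellinger.} Neither proposed fix actually kills the sign-mismatch mass. For signed duplication into $\Delta_{2k}$ you get $\sum_j\bigl[(u_j^+-v_j^+)^2+(u_j^--v_j^-)^2\bigr]$; when $u_jv_j<0$ this equals $u_j^2+v_j^2$, while $(u_j-v_j)^2=u_j^2+v_j^2+2|u_jv_j|$, so the discrepancy is still $2\sum_{u_jv_j<0}|u_jv_j|$, a constant fraction of $\|u-v\|_2^2$ for far pairs such as $p=e_1$, $q=e_2$. A random matrix with nonnegative entries avoids signs but cannot obey a JL-type law for arbitrary difference vectors: concentration of $\|Ax\|_2^2$ about $\|x\|_2^2$ relies on mean-zero entries to cancel the cross terms.

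\textbf{What the paper does instead.} The paper never tries to invert a feature map. After embedding into $\ell_2^2$ and applying JL to land in $\reals^k$, it isometrically places the points in the hyperplane $\{x\in\reals^{k+1}:\sum_i x_i=0\}$, then \emph{scales and translates} the whole configuration into a tiny ball $B_r$ around the centroid of $\Delta_{k+1}$. The key lemma is a Taylor expansion: for a well-behaved $f$-divergence, coordinate-wise
\[
p\,f\!\left(\tfrac{q}{p}\right)=\frac{f''(1)}{2p}(p-q)^2+O\bigl((p-q)^3\bigr),
\]
so for $p,q\in B_r$ with $r$ small enough (depending only on $k$, $\eps$, $f$) one has $D_f(p,q)=(1\pm\eps)\,c_k\|p-q\|_2^2$. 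Thus a single affine map into the neighborhood of the centroid works for $\he$, $\js$, and $\chi^2$ simultaneously, with no sign issues and no kernel inversion. The insight you are missing is that every well-behaved $f$-divergence has a region of the simplex where it \emph{is} (up to scaling) $\ell_2^2$; landing there is much easier than undoing the square root.
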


The theorem extends to ``well-behaved" $f$-divergences (See Section \ref{sec:background} for a precise definition). Moreover, the dimensionality reduction is constructive for any divergence with a  finite dimensional kernel (such as the Hellinger divergence), or an infinite dimensional Kernel that can be sketched in finite space, as we show is feasible for the JS and $\chi^2$ divergences.
 
\paragraph{Our techniques.}
\label{sec:dimens-reduct}

The unifying approach of our three results---sketching,  embedding into $\ell_2^2$, and  dimensionality reduction---is to analyze carefully the infinite dimensional kernel of the information divergences. Quantizing and truncating the kernel yields the sketching result,  sampling repeatedly from it produces an embedding into $\ell_2^2$. Finally given such an embedding, we show how to perform dimensionality reduction  by proving that each of the divergences admits a region of the simplex where it is similar to $\ell_2^2$. We point out that to the best of our knowledge, this is the first result that explicitly uses the kernel representation of these information distances to build approximate geometric structures; while the \emph{existence} of a kernel for the Jensen--Shannon distance was well-known, this structure had never been exploited for algorithmic advantage.

\section{Related Work}\label{sec:related-work}

The works by~\citet{1365067}, and then by~\citet{vedaldi2012efficient} 
study embeddings of information divergences into an infinite dimensional Hilbert space by representing them as an integral along a one-dimensional curve in $\mathbb{C}$. Vedaldi and Zisserman give an explicit formulation of this kernel for JS and $\chi^2$ divergences, for which a discretization (by quantizing and truncating) yields an additive error embedding into a finite dimensional $\ell_2^2$. However, they do not obtain quantitative bounds on the dimension of target space needed or address the question of multiplicative approximation guarantees. 

In the realm of sketches,~\citet*{GIM} show $\Omega(n)$ space (where $n$ is the length of the stream) is required in the strict turnstile model even for a constant factor multiplicative approximation. These bounds hold for a wide range of information divergences, including JS, Hellinger and the $\chi^2$ divergences. They show however that an \emph{additive} error of $\eps$ can be achieved using $O \left(\frac{1}{\eps^3} \log n \right)$ space. 
In contrast, one can indeed achieve a multiplicative approximation in the aggregate streaming model for information divergences that have a finite dimensional embedding into $\ell_2^2$. For instance,~\citet{guhastreaming} observe that for the Hellinger distance that has a trivial such embedding, sketching is equivalent to sketching $\ell_2^2$ and hence may be done up to a $(1+\eps)$-multiplicative approximation in $\frac{1}{\eps^2} \log n$ space. This immediately implies a constant factor approximation of JS and $\chi^2$ divergences in the same space, but no bounds have been known prior to our work for a $(1+\eps)$-sketching result for JS and $\chi^2$ divergences in \emph{any} streaming model.

Moving onto dimensionality reduction from simplex to simplex, in the only other work we are aware of,~\citet*{kyng2010} show a limited dimensionality reduction result for the Hellinger distance. Their
approach works by showing that if the input points lie in a specific region of
the simplex, then a standard random projection will keep the points on a
lower-dimensional simplex while preserving the distances
approximately. Unfortunately, this region is a small ball centered in the interior of
the simplex, which further shrinks with the dimension. This is in sharp contrast to our work here, where the input points are unconstrained. 

While it does not admit a kernel, the $\ell_1$ distance is also an
$f$-divergence, and it is therefore natural to investigate its potential
connection with the measures we study here. For $\ell_1$, it is well known that
significant dimensionality reduction is not possible: an embedding with
distortion $1+\eps$ requires the points to be embedded in  $n^{1 - O \left( \log
    \frac{1}{\eps} \right)}$ dimensions, which is nearly linear. This result was
proved (and strengthened) in a series of
results~\citep{andoni2011near,regev,lee2004embedding,brinkman2005impossibility}. 

The general literature of sketching and embeddability in normed spaces is too
extensive to be reviewed here: we point the reader to \citet{2014sketching} for
a full discussion of results in this area. One of the most famous applications
of dimension reduction is the Johnson--Lindenstrauss(JL) Lemma, which states
that any set of $n$ points in $\ell_2^2$ can be embedded into
$O \left( \frac{\log n}{\eps^2} \right)$ dimensions in the same space while
preserving pairwise distances to within $(1 \pm \eps)$. This result has
become a core step in algorithms for near neighbor search~\citep{ailon2006, andoni2006}, speeding up clustering
algorithms~\citep{clusterdimred}, and efficient approximation of
matrices~\citep{matrixdimred}, among many others.

Although sketching, embeddability, and dimensionality reduction are related
operations, they are not always equivalent. For example, even though $\ell_1$ and
$\ell_2$ have very different behavior under dimensionality reduction, they can
both be sketched to an arbitrary error in the turnstile model (and in fact any
$\ell_p$ norm, $p \le 2$ can be sketched using $p$-stable
distributions~\citep{indyk2000stable}). In the offline 
setting,~\citet{2014sketching} show that sketching and embedding of normed spaces are equivalent: for any finite-dimensional normed space $X$, a constant distortion and space sketching algorithm 
for $X$ exists if and only if there exists a linear embedding of $X$ 
into $\ell_{1-  \eps}$. 
% In contrast, for dimensionality reduction of $n$ points under $\ell_1$ with distortion $1+\eps$, Andoni \etal~\cite{andoni2011near} and later Regev~\cite{regev} prove the number of dimensions required may nearly linear,. For general metric embeddings, it is well known that any $n$ point metric space can be embedded into a Euclidean space with distortion $O(\log n)$ and that this is in the worst case the optimal possible~\cite{bourgain}. Finally in our small sampling of a vast literature, we note the best approximation to Sparsest Cut via a well-known SDP relaxation corresponds to the best distortion bound achievable for the embedding of $n$-point metrics of negative type into $\ell_1$~\cite{khotunique}.  

\section{Background}
\label{sec:background}
In this section, we define precisely the class of information divergences that we work with, and their specific properties that allow us to obtain sketching, embedding, and dimensionality results.
For what follows $\Delta_d$ denotes the \emph{$d$-simplex}: $\Delta_d = \{ (x_1, \ldots, x_d) \mid \sum x_i = 1 $ and $x_i \geq 0, \forall i \}$.  Let
$[d] = \{ 1, \ldots, d \}$. 
\begin{definition}[$f$-divergence]
  Let $p$ and $q$ be two distributions on $[n]$. A convex function  $f: [0,
  \infty) \to \reals$ such that $f(1) = 0$ gives rise to an 
  \emph{$f$-divergence}
  $D_f: \Delta_{d} \to \reals$ as:
  \[D_f(p,q) = \sum_{i=1}^d p_i \cdot f \left(\frac{q_i}{p_i} \right),\]
  where we define $0 \cdot f(0 /0) = 0$,  $a\cdot f(0/a) = a \cdot \lim_{u \to 0} f(u)$, and  $0 \cdot f(a/0) = a \cdot \lim_{u \to \infty} f(u)/u$.
\end{definition}

\begin{definition}[Regular distance]
  We call a distance function $D: X \to \reals$ \emph{regular} if there 
  exists a feature map $\phi: X \rightarrow V$, where $V$ is a (possibly infinite dimensional) Hilbert
  space, such that:
  \[ D(x,y) = \| \phi(x) - \phi(y) \|^2 \quad \forall x,y \in X. \]
\end{definition}

The work of~\cite{1365067} establishes that $\js$ is regular; 
\cite{vedaldi2012efficient} construct an explicit feature map for the JS kernel, as $\phi(x) = \int_{-\infty}^{+\infty} \Psi_x(\omega) \diff \omega$, where $\Psi_x(\omega): \reals \rightarrow \mathbb{C}$ is given by
\[ \Psi_x(\omega) = \exp(i \omega \ln x) \sqrt{\frac{2x \sech(\pi
    \omega)}{(\ln 4)(1 + 4\omega^2)}}.\] 

Hence we have for $x$, $y \in \reals$, $\js(x,y) = \| \phi(x) - \phi(y) \|^2 = \int_{-\infty}^{+\infty} \| \Psi_x(\omega) - \Psi_y(\omega) \|^2    \diff \omega$.
The ``embedding'' for a given distribution $p \in \Delta_{d}$ is then 
the concatenation of the functions $\phi(p_i)$, i.e.,
$\phi(p) = (\phi_{p_1}, \ldots, \phi_{p_d})$. 
\begin{definition}[Well-behaved divergence]
  A \emph{well-behaved} $f$-divergence is a regular $f$-divergence such that $f(1) = 0$, $f'(1) = 0$, $f''(1) > 0$,
  and $f'''(1)$ exists.
\end{definition}

In this paper, we will focus on the following well-behaved $f$-divergences.
%that can be explicitly computed or show how to compute an approximation (in the infinite 
%dimensional case). 
\begin{definition}
The \emph{Jensen--Shannon} (JS), \emph{Hellinger}, and $\chi^2$
divergences between distributions $p$ and $q$ are defined as:
\begin{eqnarray*}
\js(p,q) & = & \sum_i p_i \log \frac{2p_i}{p_i + q_i} + q_i \log \frac{2q_i}{p_i+q_i}, \\
\he(p,q) & = & \sum_i (\sqrt{p_i} - \sqrt{q_i})^2, \\
\chi^2(p,q) & = & \sum_i \frac{ (p_i - q_i)^2 }{p_i + q_i}.
\end{eqnarray*}
\end{definition}
%All of the above divergences are well-behaved
%with the inducing functions being $ -(x+1) \log \left( 1/2 + x/2 \right) + x \log x$, $ (\sqrt{x} - 1)^2$, and $(1-x)^2 / (1+ x)$ respectively.

%%% Local Variables:
%%% mode: latex
%%% TeX-master: "170"
%%% End:

\section{Embedding JS into $\ell_2^2$}\label{sec:embed-js}
We present two algorithms for embedding JS into $\ell_2^2$. The first is deterministic and gives an additive error approximation whereas the second is
randomized but yields a multiplicative 
approximation in an offline setting.  The advantage of the first algorithm is that it can be 
realized in the streaming model, and if we make a standard assumption of polynomial
precision in the streaming input, yields a $(1+\eps)$-multiplicative approximation as well in this setting.

We derive some terms in the kernel representation of $\js(x,y)$ which we will find convenient. First, the explicit formulation in Section \ref{sec:background} yields that for $x$, $y \in \reals$:
\begin{align*}
\js(x,y) &= \int_{-\infty}^{+\infty}\left \| e^{i \omega \ln x} \sqrt{\frac{2x \sech(\pi \omega)}{(\ln 4)(1 + 4\omega^2)}}  -
	      e^{i \omega \ln y} \sqrt{\frac{2y \sech(\pi \omega)}{(\ln 4)(1 + 4\omega^2)}} \right \|^2 \diff \omega \\
	&= \int_{-\infty}^{+\infty} \left(\frac{2 \sech(\pi \omega)}{(\ln 4)(1 + 4\omega^2)} \right) \| \sqrt{x} e^{i \omega \ln x}  -
	      \sqrt{y} e^{i \omega \ln y} \|^2 \diff \omega.
\end{align*}
 For convenience, we now define:
\begin{align*}
h(x,y,\w) 
& = \| \sqrt{x} e^{i \omega \ln x} - \sqrt{y} e^{i \omega \ln y} \|^2  \\
& =(\sqrt{x} \cos (\w \ln x)-\sqrt{y} \cos(\w \ln y))^2+ (\sqrt{x} \sin(\w \ln x )-\sqrt{y} \sin (\w \ln y))^2,
\end{align*} 
and
\[\kappa(\w) = \frac{2\sech(\pi \w)}{(\ln 4)(1+4\w^2)} \ .\]
 We can then write $\js(p,q)=\sum_{i=1}^d f_{J}(p_i,q_i)$ where $$f_{J}(x,y)=\int_{-\infty}^{\infty} h(x,y,\w) \kappa(\w) \diff \w 
= x \log \left(\frac{2x}{x+y} \right) + y \log \left(\frac{2y}{x+y} \right).$$
 It is easy to verify that $\kappa(\w)$ is a distribution, i.e.,
 $\int_{-\infty}^{\infty} \kappa(\w) d\w = 1$.

\subsection{Deterministic embedding}

We will produce an embedding $\phi(p) = (\phi_{p_1}, \ldots, \phi_{p_d})$, 
where each $\phi_{p_i}$ is an integral that we can discretize by 
quantizing and truncating carefully.

\begin{algorithm}
\caption{Embed $p \in \Delta_{d}$ under JS into $\ell_2^2$.}
\label{algo:embeda}
\DontPrintSemicolon % Some LaTeX compilers require you to use \dontprintsemicolon instead
\KwIn{$p=\{p_1, \ldots, p_d\}$ where coordinates are ordered by arrival. }
\KwOut{A vector $c^p$ of length $O \left( \frac{d^2}{\eps} \log \frac{d}{\eps} \ \right)$}
$\ell \gets 1$; $J \gets \roundup{\frac{32d}{\eps} \ln \left( \frac{8d}{\eps} \right)}$, 
\; \\
\For{$j \gets -J$ \textbf{to} $J$} {
	$w_j \gets j \times \eps/32d$ \;
}
\For{$i \gets 1$ \textbf{to} $d$} {
	\For{$j \gets -J $ \textbf{to}  $J - 1$} {	
  		$a^p_{\ell} \gets \sqrt{p_i} \cos (\w_j  \ln p_i) \sqrt{\int_{\w_j}^{\w_{j+1}} \kappa(\w) d\w}$ \;  \\
		$b^p_{\ell} \gets \sqrt{p_i} \sin (\w_j  \ln p_i) \sqrt{\int_{\w_j}^{\w_{j+1}} \kappa(\w) d\w}$ \;  \\
	 	$\ell \gets \ell +1$ \;  \\
	}
}
\Return{$a^p$ concatenated with $b^p$.}\;
\end{algorithm}

%To estimate $f_{J}(x,y)$ consider the random variable $X=h(x,y,\w)$ with probability $\kappa(\w)$.
%Then $E[X]=f_{J}(x,y)$ and note that $0\leq h(x,y,\w)\leq 2$ so the average of $O(1/\eps^{-2}  \log (1/\delta))$ samples suffices to get an additive $\eps$ approximation with probability at least $1-\delta$  by an application of the Chernoff.
To analyze Algorithm \ref{algo:embeda}, we first obtain bounds on the function $h$ and its derivative.
\begin{lemma}
For $0 \leq x, y, \leq 1$, we have 
$0 \leq h(x,y,\w)\leq 2$
and
$\left |\frac{\partial h(x,y,\w)}{\partial \w} \right |\leq 16$.
\end{lemma}
\begin{proof}
Clearly $h(x, y, \w) \geq 0$.  Furthermore, since 
$0 \leq x, y \leq 1$, we have
\[
h(x,y,\w) 
\leq \left| \sqrt{x} e^{i \w \ln x} \right|^2
  + \left| \sqrt{y} e^{i \w \ln y} \right|^2
= x + y
\leq 2.
\]
\begin{eqnarray*}
& & \mbox{Next, } \left |\frac{\partial h(x,y,\w)}{\partial \w} \right | \\
&=& 
\left | 2  \left(\sqrt{x} \cos(\w \ln x)-\sqrt{y} \cos(\w \ln y )\right) \left(-\sqrt{x} \sin(\w \ln x ) \ln x + \sqrt{y} \sin(\w\ln y ) \ln y \right) \right .\\
& & \quad \left . + 2\left(\sqrt{x} \sin(\w \ln x)-\sqrt{y} \sin(\w \ln y)\right)\left(\sqrt{x} \cos(\w \ln x ) \ln x - \sqrt{y} \cos(\w \ln y) \ln y\right)  \right | \\
&\leq & 
\left | 2 \left(\sqrt{x} +\sqrt{y} \right) \left(\sqrt{x} \ln x + \sqrt{y} \ln y\right)  \right |
 +2 \left | \left(\sqrt{x} + \sqrt{y} \right) \left(\sqrt{x} \ln x + \sqrt{y} \ln y \right) \right | \enspace \leq \enspace 16,
\end{eqnarray*}
where the last inequality follows since
$\max_{0\leq x\leq 1} |\sqrt{x} \ln x| < 1$.
\end{proof}
The next two steps are useful to approximate the infinite-dimensional 
continuous representation by a finite-dimensional discrete representation 
by appropriately truncating and quantizing the integral.
\begin{lemma}[Truncation]\label{lem:truncate}
For $t\geq \ln (4/\eps)$,
\[f_{J}(x,y)\geq \int_{-t}^{t} h(x,y,\w) \kappa(\w) d\w
\geq f_{J}(x,y)-\eps \ .\]
\end{lemma}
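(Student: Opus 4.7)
The plan is to handle the two inequalities separately. The left-hand inequality is essentially free: since $h(x,y,\w) \geq 0$ (as a sum of squares) and $\kappa(\w) \geq 0$, the integrand $h(x,y,\w)\kappa(\w)$ is nonnegative, so restricting the domain from $\reals$ to $[-t,t]$ can only decrease the integral. Hence $f_J(x,y) \geq \int_{-t}^{t} h(x,y,\w)\kappa(\w)\,d\w$.

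For the right-hand inequality, I would bound the tail contribution
\[
f_J(x,y) - \int_{-t}^{t} h(x,y,\w)\kappa(\w)\,d\w = \int_{|\w|>t} h(x,y,\w)\kappa(\w)\,d\w.
\]
Using the previous lemma's bound $h(x,y,\w) \leq 2$, this is at most $2\int_{|\w|>t}\kappa(\w)\,d\w$, so it suffices to show the tail of $\kappa$ beyond $t$ is at most $\eps/2$. The key observation is that $\kappa$ decays exponentially: since $\sech(\pi\w) = 2/(e^{\pi\w}+e^{-\pi\w}) \leq 2e^{-\pi|\w|}$ and $(1+4\w^2)\geq 1$, we get $\kappa(\w) \leq \frac{4}{\ln 4}\,e^{-\pi|\w|}$.

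Integrating the exponential tail gives
\[
2\int_{|\w|>t}\kappa(\w)\,d\w \;\leq\; \frac{16}{\ln 4}\int_{t}^{\infty} e^{-\pi\w}\,d\w \;=\; \frac{16}{\pi\ln 4}\,e^{-\pi t},
\]
and substituting $t \geq \ln(4/\eps)$ yields $e^{-\pi t} \leq (\eps/4)^{\pi}$, which for all $\eps \in (0,1)$ is comfortably smaller than $\eps$ times the small constant $\pi\ln 4 / 16$ needed. (If one wants a cleaner derivation without chasing constants, one can instead use the even cruder bound $\kappa(\w)\leq e^{-|\w|}/2$ for $|\w|\geq 1$, which likewise gives tail $\leq \eps$ for $t \geq \ln(4/\eps)$.) Combining the two directions gives the stated double inequality.

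The main ``obstacle'' is really just choosing the right decay estimate for $\kappa$ so that the threshold $t = \ln(4/\eps)$ drops out cleanly; there is no structural difficulty, only a constants-chasing step. Everything else follows from the nonnegativity of $h\kappa$ and the uniform bound $h \leq 2$ established in the preceding lemma.
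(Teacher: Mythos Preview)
Your proposal is correct and follows essentially the same route as the paper: nonnegativity of $h\kappa$ for the upper bound, then $h\le 2$ together with the exponential decay $\kappa(\w)\le \tfrac{4}{\ln 4}e^{-\pi|\w|}$ to control the tail. The only cosmetic difference is that the paper inserts the intermediate estimate $\tfrac{16}{\pi\ln 4}e^{-\pi t} < 4e^{-t}$, which makes the threshold $t\ge \ln(4/\eps)$ drop out directly without the $(\eps/4)^{\pi}$ detour.
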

\begin{proof}
The first inequality follows since $h(x,y,\w)\geq 0$. 
For the second inequality, we use $h(x,y,\w)\leq 2$:
\[
\int_{-\infty}^{-t} h(x,y,\w) \kappa(\w) d\w
+
\int_{t}^{\infty} h(x,y,\w) \kappa(\w) d\w
\leq 4\int_{t}^{\infty} \kappa(\w)  d\w
< 4\int_{t}^{\infty} \frac{4e^{-\pi \w} }{\ln 4 }  d\w
%= 8 \frac{4 e^{-\pi t}}{\pi \ln 4 }  
<4 e^{- t}\leq \eps
\]
where the last line follows if $t\geq \ln (4/\eps)$.
\end{proof}

Define $\w_i=\eps i/16 $ for $i\in \{\ldots, -2,-1,0,1,2,\ldots \}$ and  
$\tilde{h}(x,y,\w)=h(x,y,\w_i)  \mbox{ where } 
i=\max\{j \mid \omega_j\leq \omega\}$. 
%
%\begin{cases}
%h(x,y,\w_{i+1}) & \mbox{if  $0\leq \w \leq \w_{i^*}$ and $i$ satisfies $\omega_i < \omega \leq \omega_{i+1}$} \\
%h(x,y,\w_{i+1}) & \mbox{if  $-\w_{i^*} \leq \w \leq 0$  and $i$ satisfies  $\omega_{i+1} \leq \omega < \omega_{i}$} \\
%0 & \mbox{otherwise} \ .
%\end{cases}
%\]

\begin{lemma}[Quantization]\label{lem:quantize}
For any $a,b$,
\[\int_{a}^{b} h(x,y,\w) \kappa(\w) d\w
=\int_{a}^{b} \tilde{h}(x,y,\w) \kappa(\w) d\w \pm \eps \ .\]
% where $t=\ln (8/\eps)$.
\end{lemma}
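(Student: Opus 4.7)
The plan is to control the pointwise discrepancy $|h(x,y,\w) - \tilde{h}(x,y,\w)|$ by the previous lemma's derivative bound and then integrate against the probability density $\kappa$. Concretely, for any $\w$, letting $i = \max\{j : \w_j \leq \w\}$, we have $\w_i \leq \w < \w_{i+1}$ and hence $|\w - \w_i| \leq \w_{i+1} - \w_i = \eps/16$. By the mean value theorem applied to $h(x,y,\cdot)$ together with the bound $|\partial h/\partial \w| \leq 16$ from the preceding lemma, this immediately gives
\[
|h(x,y,\w) - \tilde{h}(x,y,\w)| = |h(x,y,\w) - h(x,y,\w_i)| \leq 16 \cdot \frac{\eps}{16} = \eps.
\]

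With this uniform bound in hand, I would bound the integral of the difference:
\[
\left| \int_a^b h(x,y,\w)\kappa(\w)\,d\w - \int_a^b \tilde{h}(x,y,\w)\kappa(\w)\,d\w \right| \leq \int_a^b |h - \tilde{h}|\,\kappa(\w)\,d\w \leq \eps \int_a^b \kappa(\w)\,d\w.
\]
Since $\kappa$ was already noted to be a probability density, $\int_a^b \kappa(\w)\,d\w \leq \int_{-\infty}^{\infty}\kappa(\w)\,d\w = 1$, so the right-hand side is at most $\eps$, which is exactly the claimed $\pm \eps$ statement.

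There is no real obstacle here; the only subtlety is making sure the derivative bound is applied on an interval where $h(x,y,\cdot)$ is smooth, which it is (trigonometric functions of $\w$ with fixed $x,y$), and verifying that the asymmetric definition of $\tilde{h}$ (taking the left endpoint $\w_i$) still falls within distance $\eps/16$ of $\w$. If one wanted a cleaner form of the error it would also be fine to write the bound as the supremum over any subinterval of length $\eps/16$, since the derivative bound is global in $\w$ and uniform in $x,y \in [0,1]$.
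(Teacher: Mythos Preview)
Your proof is correct and is essentially identical to the paper's own argument: both bound $|h-\tilde{h}|$ pointwise by $(\eps/16)\cdot\sup|\partial h/\partial\w|\le\eps$ using the preceding lemma, then integrate against $\kappa$ and use $\int\kappa\le 1$. The only cosmetic difference is that you explicitly name the mean value theorem and the inequality $\int_a^b\kappa\le 1$, whereas the paper leaves these implicit.
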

\begin{proof}
%For $ | \omega| < |\omega_{i^*}|$, 
First note that
\[
|\tilde{h}(x,y,\w)-h(x,y,\w)| \leq \left( \frac{\eps}{16} \right)
\cdot \max_{x,y\in [0,1],\omega} \left |\frac{\partial h(x,y,\w)}{\partial \w} \right |
\leq \eps \ .\]
Hence,
$
\left | \int_{-a}^{b} \tilde{h}(x,y,\w) \kappa(\w) d\w - \int_{-a}^{b} h (x,y,\w) \kappa(\w) d\w \right |
\leq 
\left | \int_{-a}^{b} \eps \kappa(\w) d\w \right |
 \leq \eps$.
%The result then follows from Lemma~\ref{lem:truncate}.
\end{proof}

Given a real number $z$, define vectors $\vecv^z$ and 
$\vecu^z$ indexed by $i\in \{-i^*, \ldots, -2,-1,0,1,2,\ldots i^* \}$ where 
$i^* = \roundup{16 \eps^{-1} \ln (4/\eps)}$ by:
\[
\vecv^z = \sqrt{z} \cos (\w_i  \ln z) \sqrt{\int_{\w_i}^{\w_{i+1}} \kappa(\w) d\w},
\quad
\vecu^z = \sqrt{z} \sin (\w_i  \ln z) \sqrt{\int_{\w_i}^{\w_{i+1}} \kappa(\w) d\w}, \\
\]
and note that 
\begin{eqnarray*}
 (\vecv^x_i-\vecv^y_i)^2+(\vecu^x_i-\vecu^y_i)^2 
 =  h(x,y,\w_i)  \int_{\w_i}^{\w_{i+1}} \kappa(\w) d\w.
\end{eqnarray*}
Therefore,
\begin{eqnarray*}
\|\vecv^x-\vecv^y\|_2^2 + \|\vecu^x-\vecu^y\|_2^2 
&=& 
\int_{w_{-i^*}}^{w_{i^*+1}} \tilde{h}(x,y,\w) \kappa(\w) d\w 
\enspace = \enspace
\int_{w_{-i^*}}^{w_{i^*+1}} h(x,y,\w) \kappa(\w) d\w \pm \eps \\
&=& \int_{-\infty}^{\infty} h(x,y,\w) \kappa(\w) d\w \pm 2 \eps 
\enspace = \enspace f_{J}(x,y)\pm 2 \eps,
\end{eqnarray*}
where the second to last line follows from Lemma \ref{lem:quantize} and the last line follows from Lemma \ref{lem:truncate}, since  $\min (|w_{-i^*}|, w_{i^*+1})\geq \ln (4/\eps)$.

Define the vector $\veca^p$ to be the vector generated by concatenating $\vecv^{p_i}$ and $\vecu^{p_i}$ for $i\in [d]$. Then if follows that 
\[\|\veca^p-\veca^q\|_2^2 =\js(p,q) \pm 2\eps d\]
Hence we have reduced the problem of estimating $\js(p,q)$ to $\ell_2$ estimation. Rescaling $\eps \leftarrow  \eps/(2d)$ ensures the additive error is $\eps$ while the length of the vectors $\veca^p$ and $\veca^q$ is $O \left(\frac{d^2}{\eps}\log \frac{d}{\eps} \right)$. 
\begin{theorem}
\label{thm:jsmain}
Algorithm \ref{algo:embeda} embeds a set $P$ of points under JS 
into $O \left(\frac{d^2}{\eps}\log \frac{d}{\eps}  \right)$ dimensions 
under $\ell_2^2$ with $\eps$ additive error, independent of the size of $|P|$.
\end{theorem}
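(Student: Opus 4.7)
The plan is to leverage the integral kernel representation $f_J(x,y) = \int_{-\infty}^{\infty} h(x,y,\w) \kappa(\w) d\w$ and approximate it by a finite Riemann sum, then assemble the resulting vectors across coordinates. Since $\js(p,q) = \sum_{i=1}^d f_J(p_i, q_i)$, I would aim for coordinate-wise additive error $\eps' = \eps/(2d)$ so that the total error across $d$ coordinates is at most $\eps$. This rescaling is what drives the $d^2$ factor in the final dimension.

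First I would verify the two analytic properties of $h$ needed to control the approximation: a uniform bound $0 \le h(x,y,\w) \le 2$ on $[0,1]^2$, and a uniform Lipschitz bound $|\partial h/\partial \w| \le 16$. The first bound, combined with the exponential tail $\kappa(\w) = O(e^{-\pi|\w|})$, justifies truncating the integral to $\w \in [-t, t]$ for $t = \ln(4/\eps')$ at the cost of additive error $\eps'$ (Lemma \ref{lem:truncate}). The second bound lets me replace $h(x,y,\w)$ on each subinterval $[\w_j, \w_{j+1}]$ of width $\eps'/16$ by its left endpoint value, introducing pointwise error at most $\eps'$ and hence integrated error at most $\eps'$ against the probability density $\kappa$ (Lemma \ref{lem:quantize}).

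Next I would build the finite-dimensional vectors. For each real $z \in [0,1]$ and each grid index $j \in \{-i^*, \ldots, i^*\}$ with $i^* = \lceil 16 (\eps')^{-1} \ln(4/\eps') \rceil$, define
\[
\vecv^z_j = \sqrt{z} \cos(\w_j \ln z) \sqrt{\textstyle\int_{\w_j}^{\w_{j+1}} \kappa(\w) d\w}, \qquad
\vecu^z_j = \sqrt{z} \sin(\w_j \ln z) \sqrt{\textstyle\int_{\w_j}^{\w_{j+1}} \kappa(\w) d\w}.
\]
A direct trigonometric computation shows $(\vecv^x_j - \vecv^y_j)^2 + (\vecu^x_j - \vecu^y_j)^2 = h(x,y,\w_j) \int_{\w_j}^{\w_{j+1}} \kappa(\w) d\w$, so summing over $j$ reconstructs exactly the quantized, truncated Riemann sum for $f_J(x,y)$. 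Combining with the truncation and quantization lemmas yields $\|\vecv^x - \vecv^y\|_2^2 + \|\vecu^x - \vecu^y\|_2^2 = f_J(x,y) \pm 2\eps'$ per coordinate.

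Finally I would concatenate $(\vecv^{p_i}, \vecu^{p_i})$ over $i \in [d]$ into a single vector $\veca^p$, giving $\|\veca^p - \veca^q\|_2^2 = \js(p,q) \pm 2\eps' d$. Setting $\eps' = \eps/(2d)$ achieves the desired $\eps$ additive error, while the total dimension is $2d(2i^* + 1) = O(d^2 \eps^{-1} \log(d/\eps))$; the construction depends only on the coordinate values and not on $|P|$. The main subtlety here is purely bookkeeping: the linear accumulation of per-coordinate errors forces the grid to be refined by a factor of $d$, which is exactly what inflates the single-coordinate cost $O(\eps^{-1} \log(1/\eps))$ to the claimed $O(d^2 \eps^{-1} \log(d/\eps))$ bound. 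No randomness or union bound is needed, so the statement is independent of $|P|$.
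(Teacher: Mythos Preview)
Your proposal is correct and follows essentially the same argument as the paper: truncate the integral using the tail bound on $\kappa$, quantize using the Lipschitz bound on $h$, build the cosine/sine vectors weighted by $\sqrt{\int_{\w_j}^{\w_{j+1}}\kappa}$, concatenate across coordinates, and rescale $\eps$ by $1/(2d)$. The only cosmetic difference is that you introduce $\eps' = \eps/(2d)$ at the outset while the paper carries $\eps$ through and rescales at the end.
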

Note that using the JL-Lemma, the dimensionality of the target space can be
reduced to $O \left(\frac{\log |P|}{ \eps^2} \right)$.  Theorem~\ref{thm:jsmain},
along with the AMS sketch of~\cite{ams}, and the standard assumption of  polynomial precision 
immediately implies:
\begin{corollary}
There is an algorithm that works in the aggregate streaming model to 
approximate JS to within $(1+\eps)$-multiplicative factor using 
$O \left(\frac{1}{\eps^2} \log \frac{1}{ \eps} \log d \right)$ space.
\end{corollary}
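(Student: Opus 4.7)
The plan is to cascade the deterministic embedding of Theorem \ref{thm:jsmain} with a linear AMS $\ell_2^2$ sketch. Algorithm \ref{algo:embeda} processes each coordinate $p_i$ (and analogously $q_i$) independently, so in the aggregate stream model I can, the moment an index $i$ arrives bearing its values $p_i, q_i$, evaluate the closed-form expressions for the blocks $(\vecv^{p_i},\vecu^{p_i})$ and $(\vecv^{q_i},\vecu^{q_i})$ and feed them into an AMS sketch maintained on $\veca^p-\veca^q$. Linearity of AMS lets me discard each block immediately after consuming it, so the full embedding is never materialized. A single AMS query at the end of the stream returns a $(1\pm\eps)$-multiplicative estimate of $\|\veca^p-\veca^q\|_2^2$, which by the analysis preceding Theorem \ref{thm:jsmain} approximates $\js(p,q)$ up to a chosen additive parameter.

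The hard part is promoting that residual additive guarantee into a multiplicative one, and this is precisely where the polynomial-precision hypothesis is used. Since each $p_i,q_i$ is a rational with denominator $\mathrm{poly}(d)$, a short case analysis---Taylor-expanding $f_J$ near the diagonal $x=y$ to handle nearly-equal coordinates and using a direct lower bound for off-diagonal pairs---gives $\js(p,q)\ge 1/\mathrm{poly}(d)$ whenever $\js(p,q)\ne 0$. The case $\js(p,q)=0$ is trivial because the two embeddings then coincide exactly coordinate-by-coordinate. Consequently I instantiate Theorem \ref{thm:jsmain} with additive parameter $\eps'=\eps/\mathrm{poly}(d)$, which converts its additive guarantee into a $(1\pm\eps)$-multiplicative one.

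This substitution inflates the embedded dimension to $\mathrm{poly}(d)/\eps$ and each coordinate's bit length to $O(\log d+\log(1/\eps))$ under polynomial precision. An AMS sketch of such a vector uses $O(1/\eps^2)$ counters, each of size logarithmic in both the vector length and the maximum entry magnitude, so the three logarithmic factors collapse to give total space $O\bigl(\tfrac{1}{\eps^2}\log(1/\eps)\log d\bigr)$, matching the stated bound. The only subtlety worth double-checking is that the AMS failure probability can be driven below any constant without disturbing the space complexity by more than a logarithmic factor, which is standard via median-of-means boosting.
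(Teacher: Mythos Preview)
Your proposal is correct and follows exactly the route the paper takes: the paper's own ``proof'' is just the one-line remark that Theorem~\ref{thm:jsmain} together with the AMS sketch and the polynomial-precision assumption immediately yields the corollary, so you have in fact supplied more detail than the paper does (in particular, the lower bound $\js(p,q)\ge 1/\mathrm{poly}(d)$ and the space accounting are left entirely implicit there).
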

As noted earlier, this is the first algorithm in the aggregate streaming 
model to obtain an $(1 + \eps)$-multiplicative approximation to JS,
which contrasts against linear space lower bounds for the same problem
in the update streaming model.

\subsection{Randomized embedding}\label{sec:multiplicative}

In this section we show how to embed $n$ points of $\js$ into $\ell_2^{\bar{d}}$ with $(1+\eps)$ distortion where $\bar{d}=O( n^2 d^3 \eps^{-2})$. \footnote{If we ignore precision constraints on sampling from a continuous distribution in a streaming algorithm, then this also would yield a sketching bound of $O(d^3 \eps^{-2})$ for a $(1+ \eps)$ multiplicative approximation.} 

For fixed $x,y,\in [0,1]$, we first consider the random variable $T$
where $T$ takes the value $h(x,y,\omega)$ with probability $\kappa(\omega)$. 
(Recall that $\kappa(\cdot)$ is a distribution.)  We compute the
first and second moments of $T$.
\begin{theorem}
$E[T]=f_{J}(x,y)$ and $\var[T]\leq 36 (f_{J}(x,y ))^2$.
\end{theorem}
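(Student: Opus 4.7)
The plan is to obtain the mean essentially for free from the definition of $T$ and then attack the variance by exploiting an algebraic simplification of $h$. For the mean, the law of the unconscious statistician gives
\[
E[T] = \int_{-\infty}^{\infty} h(x,y,\omega)\kappa(\omega)\diff\omega = f_{J}(x,y),
\]
which is exactly the integral identity displayed just before the theorem. The real work is in the variance.

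For the variance, I would first rewrite $h$ using the identity $|ae^{i\alpha} - be^{i\beta}|^2 = (a-b)^2 + 4ab\sin^2((\alpha-\beta)/2)$ with $a=\sqrt{x}$, $b=\sqrt{y}$, $\alpha-\beta = \omega\ln(x/y)$, yielding
\[
h(x,y,\omega) \;=\; H + 4\sqrt{xy}\,\sin^2(\omega r/2), \qquad H := (\sqrt{x}-\sqrt{y})^2, \quad r := \ln(x/y).
\]
Because $H$ is constant in $\omega$, setting $C_k := E[\sin^{2k}(\omega r/2)]$ produces the clean formulae
\[
\var[T] = 16\,xy\,(C_2 - C_1^2), \qquad f_{J}(x,y) = H + 4\sqrt{xy}\,C_1 \;\geq\; 4\sqrt{xy}\,C_1 .
\]
Squaring the lower bound on $f_J$ and dividing then yields the key inequality
\[
\frac{\var[T]}{f_{J}(x,y)^2} \;\leq\; \frac{C_2 - C_1^2}{C_1^2} \;=\; \frac{C_2}{C_1^2} - 1,
\]
reducing the theorem to the uniform moment bound $C_2(r) \leq 37\,C_1(r)^2$ for all real $r$.

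This last inequality is an $L^4$-vs-$L^2$ (hypercontractivity-type) bound for the random variable $\sin(\omega r/2)$ with $\omega \sim \kappa$, and I expect it to be the main technical obstacle. My plan is to handle it in two regimes. For $|r|$ large, the Riemann--Lebesgue lemma drives the characteristic-function terms $E[\cos(kr\omega)]$ ($k=1,2$) to $0$, so $C_1 \to 1/2$ and $C_2 \to 3/8$, yielding the ratio $3/2$. For $|r|$ small, Taylor expansion gives $C_k \sim (r/2)^{2k}M_{2k}$, where $M_{2k} := \int \omega^{2k}\kappa(\omega)\diff\omega$, and the ratio tends to $M_4/M_2^2$; verifying $M_4/M_2^2 \leq 37$ reduces to concrete integral estimates on $\int \omega^{2k}\sech(\pi\omega)/(1+4\omega^2)\diff\omega$ that exploit the exponential decay of $\sech(\pi\omega)$. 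The intermediate $|r|$ regime is then filled in by continuity of $C_2(r)/C_1(r)^2$ (which extends continuously to both limits) combined with the two asymptotic controls. The degenerate case $r=0$ (i.e.\ $x=y$) is trivial because then $\var[T] = f_J(x,y) = 0$.
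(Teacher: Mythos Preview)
Your computation of $E[T]$ matches the paper exactly, and your decomposition
\[
h(x,y,\omega)=H+4\sqrt{xy}\,\sin^2(\omega r/2),\qquad H=(\sqrt{x}-\sqrt{y})^2,\ r=\ln(x/y),
\]
together with the reduction $\var[T]/f_J(x,y)^2\le C_2/C_1^2-1$ is correct and rather elegant. The route, however, is genuinely different from the paper's, and your final step does not close.

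\textbf{The gap.} Knowing the limits of $C_2(r)/C_1(r)^2$ as $r\to 0$ (namely $M_4/M_2^2$) and as $r\to\infty$ (namely $3/2$), together with continuity on $(0,\infty)$, tells you only that $\sup_r C_2(r)/C_1(r)^2<\infty$; it does \emph{not} tell you the supremum is at most $37$. A continuous function with prescribed finite endpoint limits can overshoot both limits arbitrarily in between. So your plan, as written, proves $\var[T]=O(f_J(x,y)^2)$ with an unspecified constant, but not the stated bound $36$. To finish you would need either a monotonicity argument for $r\mapsto C_2(r)/C_1(r)^2$, or a uniform-in-$r$ inequality (e.g.\ bounding $C_2$ directly by a multiple of $C_1^2$ via an explicit estimate on the characteristic function $E[\cos(k r\omega)]$ valid for all $r$), and neither of these is sketched.

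\textbf{What the paper does instead.} The paper avoids the uniform-in-$r$ moment problem entirely by a pointwise bound on $h$: from the triangle inequality it shows
\[
\sqrt{h(x,y,\omega)}\le |\sqrt{x}-\sqrt{y}|+\sqrt{y}\,|e^{i\omega\ln x}-e^{i\omega\ln y}|\le \sqrt{f_H(x,y)}\,(1+2|\omega|),
\]
so that $h(x,y,\omega)\le f_H(x,y)(1+2|\omega|)^2$ uniformly. Then $E[T^2]\le f_H(x,y)^2\int(1+2|\omega|)^4\kappa(\omega)\,d\omega\approx 8.94\,f_H(x,y)^2$, and finally the known inequality $f_H\le f_\chi\le 2f_J$ gives $E[T^2]<36\,f_J(x,y)^2$. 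This replaces your one-parameter family of moment inequalities by a \emph{single} numerical integral and a standard comparison between Hellinger and Jensen--Shannon.

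\textbf{Trade-offs.} Your decomposition isolates $\var[T]$ exactly (rather than the cruder $E[T^2]$) and could in principle yield a sharper constant; it also makes the dependence on $r=\ln(x/y)$ transparent. But turning that into the explicit constant $36$ requires a uniform quantitative bound that you have not supplied, whereas the paper's triangle-inequality route gets the constant in one line plus one integral.
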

\begin{proof}
The expectation follows immediately from the definition:
\[
E[T]=\int_{-\infty}^{\infty} h(x,y,\omega) \kappa(\omega) d\omega =f_{J}(x,y).
\]
To bound the variance it will be useful to define the function
$f_{H}(x,y)=(\sqrt{x}-\sqrt{y})^2$ corresponding to the one-dimensional Hellinger distance that is related to $f_{J}(x,y)$ as follows. We now state two claims regarding $f_{H}(x,y)$ and $f_{\chi}(x,y)$:

\begin{claim}\label{claim:fg}
For all $x, y \in [0,1]$, $f_{H}(x,y)\leq 2f_{J}(x,y) $.
\end{claim}
\begin{proof}
Let $f_{\chi}(x,y)=\frac{ (x-y)^2}{x+y}$ correspond to the one-dimensional $\chi^2$ distance.  Then, we have
\begin{align*}
\frac{f_{\chi}(x,y)}{f_{H}(x,y)} &
= \frac{ (x-y)^2}{(x+y)(\sqrt{x} - \sqrt{y})^2} 
%=  \frac{ (\sqrt{x} - \sqrt{y})^2 (\sqrt{x}+\sqrt{y})^2}{(x+y)(\sqrt{x} - \sqrt{y})^2}\\
= \frac{(\sqrt{x} + \sqrt{y})^2}{x+y} = \frac{x+y + 2 \sqrt{xy}}{x+y} \geq 1 \ .
% 1 + \frac{2 \sqrt{xy}}{x+y} 
%%\leq 1+1 
%\leq 2. 
\end{align*}
This shows that $f_{H}(x,y) \leq f_{\chi}(x,y)$.  To show $f_{\chi}(x,y)\leq 2f_{J}(x,y)$ we refer the reader to~\cite[Section 3]{topsoe2000}.  Combining these two relationships gives us our claim.  
\end{proof}

We then bound $h(x,y,\omega)$ in terms of $f_{H}(x,y)$ as follows.
\begin{claim}\label{claim:hg}
For all $x,y\in [0,1],\omega\in \reals$, 
$h(x,y,\omega)\leq f_{H}(x,y) (1+2 |\omega|)^2$.
\end{claim}
\begin{proof}
Without loss of generality, assume $x\geq y$.
\begin{eqnarray*}
\sqrt{h(x,y,\omega)}
&=& |\sqrt{x} \cdot e^{i \omega \ln x} - \sqrt{y} \cdot  e^{i \omega \ln y}| \\
& \leq & |\sqrt{x}  \cdot e^{i \omega \ln x} - \sqrt{y} \cdot e^{i \omega \ln x}|+|\sqrt{y} \cdot e^{i \omega \ln x} - \sqrt{y}  \cdot e^{i \omega \ln y}| \\
&=& |\sqrt{x}-\sqrt{y}|+\sqrt{y} \cdot 
  | e^{i \omega \ln x} - e^{i \omega \ln y} | \\
&=& |\sqrt{x}-\sqrt{y}|+\sqrt{y} \cdot 2 \cdot |\sin(\omega \ln(x/y)/2)| \\ 
&\leq & \sqrt{f_{H}(x,y)} + \sqrt{y} \cdot 2 \cdot  |\omega \ln(\sqrt{x/y})| \\
&\leq & \sqrt{f_{H}(x,y)} + \sqrt{y}\cdot 2 \cdot  |\sqrt{x/y}-1|
 \cdot | \omega|   \\
 &= &  \sqrt{f_{H}(x,y)} + 2 \sqrt{f_{H}(x,y)} \cdot | \omega|
 \end{eqnarray*}
 and hence $h(x,y,\omega)\leq f_{H}(x,y) (1 + 2 |\omega| )^2$ as required.
\end{proof}

These claims allow us to bound the variance:
\begin{eqnarray*}
\var[T]\leq E[T^2] = \int_{-\infty}^{\infty} (h(x,y,\w))^2 \kappa(\omega) d\w
&\leq & f_{H}(x,y)^2 \int_{-\infty}^{\infty} (1+2 |\omega|)^4  \kappa(\omega) d\w 
\\ 
& =&  f_{H}(x,y)^2 \cdot 8.94 \enspace < \enspace 36 f_{J}(x,y)^2,
\end{eqnarray*}
%which follows from Claim~\ref{claim:fg} and Claim~\ref{claim:hg}
\end{proof}
This naturally gives rise to the following algorithm.
\begin{algorithm}
\caption{Embeds point $p \in \Delta_{d}$ under JS into $\ell_2^2$.}
\label{algo:embedb}
\DontPrintSemicolon % Some LaTeX compilers require you to use \dontprintsemicolon instead
\KwIn{$p=\{p_1, \ldots, p_d\}$. }
\KwOut{A vector $c^p$ of length $O \left( n^2 d^3 \eps^{-2}\right)$}
$\ell \gets 1$; $s \gets \roundup{36 n^2 d^2 \eps^{-2}}$ \; \\
\For{$j \gets 1 $ \textbf{to} $s$} {
	$\w_j \gets$  a draw from $\kappa(\w)$;
}
\For{$i \gets 1$ \textbf{to} $d$} {
	\For{$j \gets 1 $ \textbf{to} $s$} {	
  		$a^p_{\ell} \gets \left( \sqrt{p_i} \cos (\w_j \ln p_i)/s \right)$ \;  \\
		$b^p_{\ell} \gets \left( \sqrt{p_i} \sin (\w_j \ln p_i) /s \right)$ \;  \\
	 	$\ell \gets \ell +1$ \;  \\
	}
}
\Return{$a^p$ concatenated with $b^p$.}\;
\end{algorithm}
Let $\omega_1, \ldots, \omega_t$ be $t$ independent samples chosen according to $\kappa(\omega)$. For any distribution $p$ on $[d]$, define vectors $\vecv^p, \vecu^p \in \reals^{td}$ where, for $i\in [d], j\in [t]$,
\[\vecv^p_{i,j}= \sqrt{p_i} \cdot \cos ( \omega_j \ln p_i)/t,
\quad
\vecu^p_{i,j}= \sqrt{p_i} \cdot \sin ( \omega_j \ln p_i)/t. \]
Let $\vecv^p_{i}$ be a concatenation of $\vecv^p_{i,j}$ and 
$\vecu^p_{i,j}$ over all $j\in [t]$.
Then note that $E[ \|\vecv^p_{i}-\vecv^q_{i}\|^2_2] =f_{J}(p_i,q_i)$  and $\var[\|\vecv^p_{i}-\vecv^q_{i}\|^2_2] \leq 36(f_{J}(p_i,q_i))^2/t.$
Hence, for $t=36 n^2 d^2 \eps^{-2}$, by an application of the Chebyshev bound,
\begin{equation}\label{eq:cheby}
\Pr[| \|\vecv^p_{i}-\vecv^q_{i}\|^2_2-f_{J}(p_i,q_i)|\geq \eps f_{J}(x,y)]
\leq 36 \eps^{-2}/t= (nd)^{-2}.
\end{equation}
By an application of the union bound over all pairs of points:
\[
\Pr[\exists i\in [d]~,~  p,q \in P |\|\vecv^p_{i}-\vecv^q_{i}\|^2_2-f_{J}(p_i,q_i)|\geq \eps f_{J}(p_i,q_i)]
\leq 1/d.
\]
And hence, if $\vecv^p$ is a concatenation of $\vecv^p_i$ over all 
$i\in [d]$, then with probability at least $1-1/d$ it holds for all $p$,$q \in P$: 
\[
(1-\eps ) \js(p,q)\leq \|\vecv^p-\vecv^q\|\leq (1+\eps ) \js(p,q). 
\]
The final length of the vectors is then $td = 36 n^2 d^3 \eps^{-2}$ for approximately preserving distances between every pair of points with probability at least $1 - \frac{1}{d}$.  This can be reduced further to $O(\log n / \eps^2)$ by simply applying the JL-Lemma.

%A similar construction can be worked out for the $\chi^2$ divergence as well
%(details omitted). Claim \ref{claim:fg} would become $f_{H}(x,y) \leq t(x,y)$; 
%Claim \ref{claim:hg} would remain as is, and the variance bound and 
%Chebyshev result could be applied similarly.

\section{Embedding $\chi^2$ into $\ell_2^2$}

We give here two algorithms for embedding the $\chi^2$ divergence into $\ell_2^2$.  The computation and resulting two algorithms are highly analogous to Section \ref{sec:embed-js}. First, the explicit formulation given by~\citet{vedaldi2012efficient} yields that for $x$, $y \in \reals$:
\begin{align*}
\chi^2(x,y) &= \int_{-\infty}^{+\infty}\left \| e^{i \omega \ln x} \sqrt{x \sech(\pi \omega)}  -
	      e^{i \omega \ln y} \sqrt{y \sech(\pi \omega)} \right \|^2 \diff \omega \\
	&= \int_{-\infty}^{+\infty} \left( \sech(\pi \omega) \right) \| \sqrt{x} e^{i \omega \ln x}  -
	      \sqrt{y} e^{i \omega \ln y} \|^2 \diff \omega.
\end{align*}
 For convenience, we now define:
\[h(x,y,\w)  = \| \sqrt{x} e^{i \omega \ln x} - \sqrt{y} e^{i \omega \ln y} \|^2  \, \] 
and $\kappa_{\chi}(\w) = \sech(\pi \w) .$

 We can then write $\chi^2(p,q)=\sum_{i=1}^d f_{\chi}(p_i,q_i)$ where $$f_{\chi}(x,y)=\int_{-\infty}^{\infty} h(x,y,\w) \kappa_{\chi}(\w) \diff \w 
= \frac{\left( x-y \right)^2}{x+y}.$$
 It is easy to verify that $\kappa_{\chi}(\w)$ is a distribution, i.e.,
 $\int_{-\infty}^{\infty} \kappa_{\chi}(\w) d\w = 1$.

\subsection{Deterministic embedding}
We will produce an embedding $\phi(p) = (\phi_{p_1}, \ldots, \phi_{p_d})$, 
where each $\phi_{p_i}$ is an integral that we discretize appropriately. 

\begin{algorithm}
\caption{Embed $p \in \Delta_{d}$ under $\chi^2$ into $\ell_2^2$.}
\label{algo:embedatr}
\DontPrintSemicolon % Some LaTeX compilers require you to use \dontprintsemicolon instead
\KwIn{$p=\{p_1, \ldots, p_d\}$ where coordinates are ordered by arrival. }
\KwOut{A vector $c^p$ of length $O \left( \frac{d^2}{\eps} \log \frac{d}{\eps} \ \right)$}
$\ell \gets 1$; $J \gets \roundup{\frac{32d}{\eps} \ln \left( \frac{6d}{\eps} \right)}$, 
\; \\
\For{$j \gets -J$ \textbf{to} $J$} {
	$w_j \gets j \times \eps/32d$ \;
}
\For{$i \gets 1$ \textbf{to} $d$} {
	\For{$j \gets -J $ \textbf{to}  $J - 1$} {	
  		$a^p_{\ell} \gets \sqrt{p_i} \cos (\w_j  \ln p_i) \sqrt{\int_{\w_j}^{\w_{j+1}} \kappa_{\chi}(\w) d\w}$ \;  \\
		$b^p_{\ell} \gets \sqrt{p_i} \sin (\w_j  \ln p_i) \sqrt{\int_{\w_j}^{\w_{j+1}} \kappa_{\chi}(\w) d\w}$ \;  \\
	 	$\ell \gets \ell +1$ \;  \\
	}
}
\Return{$a^p$ concatenated with $b^p$.}\;
\end{algorithm}

\begin{lemma}
For $0 \leq x, y, \leq 1$, we have 
$0 \leq h(x,y,\w)\leq 2$
and
$\left |\frac{\partial h(x,y,\w)}{\partial \w} \right |\leq 16$.
\end{lemma}

Similar to Section \ref{sec:embed-js}, the next two steps analyze truncating and quantizing the integral.
\begin{lemma}[Truncation]\label{lem:truncatetr}
For $t\geq \ln (3/\eps)$,
\[f_{\chi}(x,y)\geq \int_{-t}^{t} h(x,y,\w) \kappa_{\chi}(\w) d\w
\geq f_{\chi}(x,y)-\eps \ .\]
\end{lemma}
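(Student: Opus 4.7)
The proof will closely mirror Lemma \ref{lem:truncate} for the Jensen--Shannon case, with the only real change being the replacement of the $\js$ kernel weight $\kappa(\w) = \frac{2\sech(\pi\w)}{(\ln 4)(1+4\w^2)}$ by the simpler $\chi^2$ kernel weight $\kappa_\chi(\w) = \sech(\pi\w)$. The first inequality, namely $f_\chi(x,y) \geq \int_{-t}^{t} h(x,y,\w) \kappa_\chi(\w) d\w$, is immediate because $h(x,y,\w) \geq 0$ and $\kappa_\chi(\w) \geq 0$ everywhere, so shrinking the interval of integration only decreases the value.

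For the second inequality, the plan is to show that the mass of the integrand outside $[-t,t]$ is at most $\eps$. Using the preceding lemma's bound $h(x,y,\w) \leq 2$ and the elementary tail bound $\sech(\pi\w) = \frac{2}{e^{\pi\w}+e^{-\pi\w}} \leq 2 e^{-\pi|\w|}$, I would estimate
\[
\int_{-\infty}^{-t} h(x,y,\w) \kappa_\chi(\w) d\w + \int_{t}^{\infty} h(x,y,\w) \kappa_\chi(\w) d\w
\leq 4 \int_{t}^{\infty} 2 e^{-\pi\w} d\w
= \frac{8}{\pi} e^{-\pi t}.
\]
Since $\pi > 1$ and $t \geq 0$, we have $e^{-\pi t} \leq e^{-t}$, so the tail mass is at most $\frac{8}{\pi} e^{-t} < 3 e^{-t}$.

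Finally, imposing $3 e^{-t} \leq \eps$ gives exactly the hypothesis $t \geq \ln(3/\eps)$, completing the argument. There is no genuine obstacle here; the only point worth care is that the $(1+4\w^2)^{-1}$ factor that helped the $\js$ analysis is absent in $\kappa_\chi$, so one must rely entirely on the exponential decay of $\sech(\pi\w)$, which is already sufficient and in fact yields a slightly cleaner threshold ($\ln(3/\eps)$ versus $\ln(4/\eps)$ in the $\js$ case).
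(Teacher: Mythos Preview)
Your proposal is correct and follows essentially the same argument as the paper: both use $h(x,y,\w)\le 2$ and the bound $\sech(\pi\w)\le 2e^{-\pi|\w|}$ to control the tail by $4\int_t^\infty 2e^{-\pi\w}\,d\w$, then observe that this is $\tfrac{8}{\pi}e^{-\pi t}<3e^{-t}\le\eps$ under the hypothesis $t\ge\ln(3/\eps)$. You spell out the intermediate quantity $\tfrac{8}{\pi}e^{-\pi t}$ and the step $e^{-\pi t}\le e^{-t}$ a bit more explicitly than the paper does, but the route is identical.
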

\begin{proof}
The first inequality follows since $h(x,y,\w)\geq 0$. 
For the second inequality, we use $h(x,y,\w)\leq 2$:
\[
\int_{-\infty}^{-t} h(x,y,\w) \kappa_{\chi}(\w) d\w
+
\int_{t}^{\infty} h(x,y,\w) \kappa_{\chi}(\w) d\w
\leq 4\int_{t}^{\infty} \kappa_{\chi}(\w)  d\w
< 4\int_{t}^{\infty} 2e^{-\pi \w}  d\w  
<3 e^{- t}\leq \eps
\]
where the last line follows if $t\geq \ln (3/\eps)$.
\end{proof}

Define $\w_i=\eps i/16 $ for $i\in \{\ldots, -2,-1,0,1,2,\ldots \}$ and  
$\tilde{h}(x,y,\w)=h(x,y,\w_i)  \mbox{ where } 
i=\max\{j \mid \omega_j\leq \omega\}$. We recall the following Lemma from Section \ref{sec:embed-js}:

\begin{lemma}[Quantization]\label{lem:quantizetr}
For any $a,b$,
\[\int_{a}^{b} h(x,y,\w) \kappa_{\chi}(\w) d\w
=\int_{a}^{b} \tilde{h}(x,y,\w) \kappa_{\chi}(\w) d\w \pm \eps \ .\]
\end{lemma}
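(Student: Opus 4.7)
The plan is to mirror verbatim the structure of Lemma~\ref{lem:quantize} from the JS section, since the only thing that changes is that the kernel $\kappa(\w)$ is replaced by $\kappa_\chi(\w)=\sech(\pi\w)$, and the function $h(x,y,\w)$ is literally the same as before. The first step is to observe that, by the immediately preceding lemma (the $\chi^2$ analog of the derivative bound), $|\partial h/\partial \w|\le 16$ uniformly for $x,y\in[0,1]$. Combined with the quantization grid spacing $\w_{i+1}-\w_i=\eps/16$, the mean value theorem gives
\[
|\tilde h(x,y,\w)-h(x,y,\w)|\;\le\;\frac{\eps}{16}\cdot\max_{x,y\in[0,1],\,\w}\left|\frac{\partial h(x,y,\w)}{\partial\w}\right|\;\le\;\eps
\]
for every $\w$.

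The second step is to integrate this pointwise bound against $\kappa_\chi$. Using the triangle inequality for integrals,
\[
\left|\int_a^b \tilde h(x,y,\w)\kappa_\chi(\w)\,d\w-\int_a^b h(x,y,\w)\kappa_\chi(\w)\,d\w\right|\;\le\;\int_a^b \eps\,\kappa_\chi(\w)\,d\w.
\]
Since $\kappa_\chi(\w)=\sech(\pi\w)\ge 0$ and $\int_{-\infty}^{\infty}\kappa_\chi(\w)\,d\w=1$ (as already noted in the paragraph defining $\kappa_\chi$), the right-hand side is bounded by $\eps\cdot\int_{-\infty}^{\infty}\kappa_\chi(\w)\,d\w=\eps$, which is exactly the claim.

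I do not expect any real obstacle: the proof is a direct transcription of the JS quantization argument, and both required ingredients (the uniform derivative bound on $h$ and the normalization $\int\kappa_\chi=1$) have already been established earlier in this section. The only mild point to notice is that the statement is for arbitrary limits $a,b$, but since $\kappa_\chi\ge 0$ the partial integral $\int_a^b\kappa_\chi\le 1$ regardless of $a,b$, so no additional care is needed.
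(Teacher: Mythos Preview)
Your proposal is correct and matches the paper's own treatment exactly: the paper simply recalls the JS quantization lemma, since the argument is verbatim the same with $\kappa_\chi$ in place of $\kappa$. Both ingredients you cite---the uniform bound $|\partial h/\partial\w|\le 16$ and the normalization $\int_{-\infty}^\infty\kappa_\chi=1$---are precisely what the paper uses.
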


Given a real number $z$, define vectors $\vecv^z$ and 
$\vecu^z$ indexed by $i\in \{-i^*, \ldots, -2,-1,0,1,2,\ldots i^* \}$ where 
$i^* = \roundup{16 \eps^{-1} \ln (3/\eps)}$ by:
\[
\vecv^z = \sqrt{z} \cos (\w_i  \ln z) \sqrt{\int_{\w_i}^{\w_{i+1}} \kappa_{\chi}(\w) d\w},
\quad
\vecu^z = \sqrt{z} \sin (\w_i  \ln z) \sqrt{\int_{\w_i}^{\w_{i+1}} \kappa_{\chi}(\w) d\w}, \\
\]
and note that 
\begin{eqnarray*}
 (\vecv^x_i-\vecv^y_i)^2+(\vecu^x_i-\vecu^y_i)^2 
 =  h(x,y,\w_i)  \int_{\w_i}^{\w_{i+1}} \kappa_{\chi}(\w) d\w.
\end{eqnarray*}
Therefore,
\begin{eqnarray*}
\|\vecv^x-\vecv^y\|_2^2 + \|\vecu^x-\vecu^y\|_2^2 
&=& 
\int_{w_{-i^*}}^{w_{i^*+1}} \tilde{h}(x,y,\w) \kappa_{\chi}(\w) d\w 
\enspace = \enspace
\int_{w_{-i^*}}^{w_{i^*+1}} h(x,y,\w) \kappa_{\chi}(\w) d\w \pm \eps \\
&=& \int_{-\infty}^{\infty} h(x,y,\w) \kappa_{\chi}(\w) d\w \pm 2 \eps 
\enspace = \enspace f_{\chi}(x,y)\pm 2 \eps,
\end{eqnarray*}
where the second to last line follows from Lemma \ref{lem:quantizetr} and the last line follows from Lemma \ref{lem:truncatetr}, since  $\min (|w_{-i^*}|, w_{i^*+1})\geq \ln (3/\eps)$.

Define the vector $\veca^p$ to be the vector generated by concatenating $\vecv^{p_i}$ and $\vecu^{p_i}$ for $i\in [d]$. Then if follows that 
\[\|\veca^p-\veca^q\|_2^2 =\chi^2(p,q) \pm 2\eps d\]
Hence we have reduced the problem of estimating $\chi^2(p,q)$ to $\ell_2$ estimation. Rescaling $\eps \leftarrow  \eps/(2d)$ ensures the additive error is $\eps$ while the length of the vectors $\veca^p$ and $\veca^q$ is $O \left(\frac{d^2}{\eps}\log \frac{d}{\eps} \right)$. 
\begin{theorem}
\label{thm:trmain}
Algorithm \ref{algo:embedatr} embeds a set $P$ of points under $\chi^2$ 
into $O \left(\frac{d^2}{\eps}\log \frac{d}{\eps}  \right)$ dimensions 
under $\ell_2^2$ with $\eps$ additive error, independent of the size of $|P|$.
\end{theorem}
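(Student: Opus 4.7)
The proof follows the template established for the Jensen--Shannon case in Section \ref{sec:embed-js} almost verbatim, since the kernel representation of $\chi^2$ differs from JS only in the weighting function $\kappa_\chi(\w) = \sech(\pi\w)$ (instead of $\kappa(\w)$), while the inner function $h(x,y,\w)$ is identical. My plan is therefore to reassemble the truncation and quantization lemmas established just above into the claimed dimension bound.

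First I would fix a pair $x,y\in[0,1]$ and apply Lemma \ref{lem:truncatetr} with threshold $t = \ln(3/\eps)$ to replace $f_\chi(x,y) = \int_{-\infty}^{\infty} h(x,y,\w)\kappa_\chi(\w)\,d\w$ by its truncation to $[-t,t]$ up to additive $\eps$ error. Next I would invoke Lemma \ref{lem:quantizetr} on this finite interval, using the step size $\eps/16$, to replace $h(x,y,\w)$ by its piecewise-constant surrogate $\tilde h(x,y,\w)$, incurring another additive $\eps$ error. Combining, the quantized truncated integral approximates $f_\chi(x,y)$ to within $2\eps$.

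Then I would identify this quantized truncated integral with a squared Euclidean distance. Given the vectors $\vecv^z, \vecu^z$ defined in the excerpt (indexed over $i\in\{-i^*,\dots,i^*\}$ with $i^* = \roundup{16\eps^{-1}\ln(3/\eps)}$), the pointwise identity
\[ (\vecv^x_i - \vecv^y_i)^2 + (\vecu^x_i - \vecu^y_i)^2 = h(x,y,\w_i)\int_{\w_i}^{\w_{i+1}}\kappa_\chi(\w)\,d\w \]
shows that $\|\vecv^x-\vecv^y\|_2^2 + \|\vecu^x-\vecu^y\|_2^2$ equals the integral of $\tilde h(x,y,\w)\kappa_\chi(\w)$ over $[\w_{-i^*},\w_{i^*+1}]$. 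Summing this identity over the $d$ coordinates and setting $\veca^p$ to be the concatenation of the $\vecv^{p_i}, \vecu^{p_i}$, I get $\|\veca^p - \veca^q\|_2^2 = \chi^2(p,q) \pm 2\eps d$.

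Finally I would rescale the target accuracy by $\eps \leftarrow \eps/(2d)$ so that the total additive error is $\eps$. The length of $\veca^p$ is $2d(2i^*+1) = O((d/\eps)\log(1/\eps))$ in the rescaled parameter, which becomes $O((d^2/\eps)\log(d/\eps))$ after the substitution. This yields the claimed dimension bound, independent of $|P|$. The only mild obstacle is a bookkeeping one: making sure the truncation constant ($\ln(3/\eps)$ here versus $\ln(4/\eps)$ for JS) and the corresponding choice of $J$ in Algorithm \ref{algo:embedatr} are mutually consistent so that $\min(|\w_{-i^*}|, \w_{i^*+1}) \geq \ln(3/\eps)$ holds when Lemma \ref{lem:truncatetr} is applied; otherwise the argument is a direct translation of the JS proof.
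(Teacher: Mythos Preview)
Your proposal is correct and follows essentially the same approach as the paper: truncate via Lemma~\ref{lem:truncatetr}, quantize via Lemma~\ref{lem:quantizetr}, read off the squared-$\ell_2$ identity from the $\vecv^z,\vecu^z$ vectors, sum over coordinates, and rescale $\eps\leftarrow\eps/(2d)$. Even the bookkeeping you flag (the $\ln(3/\eps)$ constant and the consistency of $i^*$ with the truncation threshold) matches the paper's treatment exactly.
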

  Theorem~\ref{thm:trmain}, along with the AMS sketch of~\cite{ams}, and the standard assumption of  polynomial precision 
immediately implies:
\begin{corollary}
There is an algorithm that works in the aggregate streaming model to 
approximate $\chi^2$ to within $(1+\eps)$-multiplicative factor using 
$O \left(\frac{1}{\eps^2} \log \frac{1}{ \eps} \log d \right)$ space.
\end{corollary}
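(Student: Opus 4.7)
The plan is to mirror exactly the argument used for the analogous corollary for JS in Section \ref{sec:embed-js}: apply Theorem \ref{thm:trmain} to reduce $\chi^2$ estimation to $\ell_2^2$ estimation, then plug in the AMS sketch, and finally use polynomial precision to convert the additive error of the embedding into a multiplicative one.

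First, I would observe that Algorithm \ref{algo:embedatr} is naturally compatible with the aggregate streaming model: when the $i$th coordinate pair $(p_i, q_i)$ arrives in full, we can compute the corresponding blocks of $\veca^p$ and $\veca^q$ (the $2(2J+1)$ coordinates indexed by the grid points $\w_{-J}, \ldots, \w_{J-1}$) immediately and feed their difference incrementally into an AMS sketch for $\ell_2^2$. Because the embedding is a concatenation over $i \in [d]$, the AMS sketch accumulates contributions across all arriving coordinates and, at the end of the stream, returns a $(1+\eps_1)$-multiplicative estimate of $\|\veca^p - \veca^q\|_2^2$ using $O(\eps_1^{-2} \log d)$ words of storage.

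Next I would set the parameters. By Theorem \ref{thm:trmain}, with additive parameter $\eps_0$ the embedding satisfies $\|\veca^p - \veca^q\|_2^2 = \chi^2(p,q) \pm \eps_0$. The key step is to invoke the polynomial precision assumption: each $p_i, q_i$ is a multiple of $1/\mathrm{poly}(d)$, so whenever $\chi^2(p,q) \neq 0$ we have $\chi^2(p,q) \geq 1/\mathrm{poly}(d)$. Choosing $\eps_0 = \eps / \mathrm{poly}(d)$ and $\eps_1 = \eps$ therefore turns the additive error into a $(1+\eps)$-multiplicative error overall. Each AMS register holds a signed combination of embedding entries whose magnitudes are bounded by a polynomial in $d$ and $1/\eps_0$, so each word has bit-length $O(\log(1/\eps) + \log d)$, yielding total space $O(\eps^{-2} \log(1/\eps) \log d)$ as claimed.

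I do not expect any genuine obstacle here, since every ingredient has already been established: Theorem \ref{thm:trmain} supplies the additive embedding, AMS supplies the $\ell_2^2$ sketch, and the polynomial precision assumption lets us absorb the additive slack. The only point requiring minor care is that the embedding coordinates $\sqrt{\int_{\w_j}^{\w_{j+1}} \kappa_{\chi}(\w)\,d\w}$ are fixed scalars that do not depend on the stream, so they can be precomputed (or absorbed into the AMS hash) and contribute no additional streaming state; this is what makes the incremental update per arriving coordinate truly $O(\eps^{-2} \log(1/\eps)\log d)$ space.
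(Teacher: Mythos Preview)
Your proposal is correct and follows essentially the same approach as the paper: the paper's entire argument is the single sentence ``Theorem~\ref{thm:trmain}, along with the AMS sketch of~\cite{ams}, and the standard assumption of polynomial precision immediately implies'' the corollary, and you have spelled out precisely those three ingredients (additive embedding from Theorem~\ref{thm:trmain}, AMS sketch for $\ell_2^2$, polynomial precision to upgrade additive to multiplicative error). Your additional remarks about compatibility with the aggregate model and the precomputability of the $\kappa_\chi$ weights are correct elaborations that the paper leaves implicit.
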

\subsection{Randomized embedding}\label{sec:multiplicativetr}

In this section we show how to embed $n$ points of $\chi^2$ into $\ell_2^{\bar{d}}$ with $(1+\eps)$ distortion where $\bar{d}=O( n^2 d^3 \eps^{-2})$. \footnote{If we ignore precision constraints on sampling from a continuous distribution in a streaming algorithm, then this also would yield a sketching bound of $O(d^3 \eps^{-2})$ for a $(1+ \eps)$ multiplicative approximation.} 

For fixed $x,y,\in [0,1]$, we first consider the random variable $T$
where $T$ takes the value $h(x,y,\omega)$ with probability $\kappa_{\chi}(\omega)$. 
(Recall that $\kappa_{\chi}(\cdot)$ is a distribution.)  We compute the
first and second moments of $T$.
\begin{theorem}
$E[T]=f_{\chi}(x,y)$ and $\var[T]\leq 23 (f_{\chi}(x,y ))^2$.
\end{theorem}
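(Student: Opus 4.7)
The plan is to mirror the structure of the JS variance proof, substituting the $\chi^2$ kernel $\kappa_{\chi}(\omega) = \sech(\pi\omega)$ for $\kappa(\omega)$ and using the $\chi^2$-specific moment $f_{\chi}$ in place of $f_J$. The expectation is immediate from the definition of $T$: since $\kappa_{\chi}$ is a probability density on $\reals$,
\[
E[T] = \int_{-\infty}^{\infty} h(x,y,\omega)\,\kappa_{\chi}(\omega)\,d\omega = f_{\chi}(x,y),
\]
where the second equality is exactly the kernel representation of $f_{\chi}$ derived at the start of this section.

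For the variance, I would use $\var[T] \leq E[T^2]$ and reuse two ingredients already established in Section \ref{sec:embed-js}. The first is Claim \ref{claim:hg}, which gives the purely pointwise bound $h(x,y,\omega) \leq f_H(x,y)(1+2|\omega|)^2$; this claim is independent of which kernel we integrate against, so it applies verbatim. The second is the inequality $f_H(x,y) \leq f_{\chi}(x,y)$, which was in fact proved explicitly inside the proof of Claim \ref{claim:fg} (via $(\sqrt{x}+\sqrt{y})^2/(x+y) \geq 1$), so it is already available and here we do not need the weaker $f_H \leq 2 f_J$ relaxation. Combining these,
\[
\var[T] \leq E[T^2] \leq f_H(x,y)^2 \int_{-\infty}^{\infty} (1+2|\omega|)^4 \sech(\pi\omega)\,d\omega \leq f_{\chi}(x,y)^2 \cdot C,
\]
where $C := \int_{-\infty}^{\infty} (1+2|\omega|)^4 \sech(\pi\omega)\,d\omega$.

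The main obstacle is evaluating the constant $C$ and checking that it is at most $23$. This is a routine but slightly tedious computation: expand $(1+2|\omega|)^4 = \sum_{k=0}^{4}\binom{4}{k}2^k|\omega|^k$, exploit the symmetry of $\sech(\pi\omega)$ to reduce to integrals of the form $I_k = 2\int_0^\infty \omega^k \sech(\pi\omega)\,d\omega$, and evaluate each $I_k$ using the closed-form identity $\int_0^\infty \omega^k \sech(\pi\omega)\,d\omega = \frac{2}{(2\pi)^{k+1}}\,|E_k|\,\Gamma(k+1)$ (expressible in terms of the Euler numbers, with e.g. $I_0 = 1$ and the higher moments decaying rapidly because $\sech(\pi\omega)$ has exponential tails). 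Note that the constant is noticeably larger than the $8.94$ obtained in the JS case because the $\chi^2$ kernel lacks the damping factor $\tfrac{2}{(\ln 4)(1+4\omega^2)}$ that shrank the JS integrand. Once the numerical bound $C < 23$ is verified, the stated bound $\var[T] \leq 23 f_{\chi}(x,y)^2$ follows immediately, completing the proof.
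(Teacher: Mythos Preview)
Your proposal is correct and follows essentially the same route as the paper: the paper proves $E[T]=f_\chi(x,y)$ directly from the kernel representation, restates $f_H(x,y)\le f_\chi(x,y)$ as its own claim, recalls the pointwise bound $h(x,y,\omega)\le f_H(x,y)(1+2|\omega|)^2$, and then bounds $\var[T]\le E[T^2]\le f_H(x,y)^2\int_{-\infty}^{\infty}(1+2|\omega|)^4\sech(\pi\omega)\,d\omega$, quoting the numerical value $22.77<23$ for the integral. One small caveat: your stated closed form $\int_0^\infty \omega^k\sech(\pi\omega)\,d\omega=\tfrac{2}{(2\pi)^{k+1}}|E_k|\,\Gamma(k+1)$ is not quite right (it already fails at $k=0$, and Euler numbers vanish for odd $k$); the correct route is via the Dirichlet beta function, $\int_0^\infty \omega^k\sech(\pi\omega)\,d\omega=\tfrac{2\,k!}{(2\pi)^{k+1}}\beta(k+1)$, or simply a direct numerical evaluation, which indeed yields $C\approx 22.77$.
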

\begin{proof}
The expectation follows immediately from the definition:
\[
E[T]=\int_{-\infty}^{\infty} h(x,y,\omega) \kappa_{\chi}(\omega) d\omega =f_{\chi}(x,y).
\]
To bound the variance we will again use the function $f_{H}(x,y)=(\sqrt{x}-\sqrt{y})^2$ corresponding to the one-dimensional Hellinger distance. We now state two claims relating $f_{H}(x,y)$ and $f_{\chi}(x,y)$:

\begin{claim}\label{claim:fgtr}
For all $x, y \in [0,1]$, $f_{H}(x,y)\leq f_{\chi}(x,y) $.
\end{claim}
\begin{proof}
Let $f_{\chi}(x,y)=\frac{ (x-y)^2}{x+y}$ correspond to the one-dimensional $\chi^2$ distance.  Then, we have
\begin{align*}
\frac{f_{\chi}(x,y)}{f_{H}(x,y)} &
= \frac{ (x-y)^2}{(x+y)(\sqrt{x} - \sqrt{y})^2} 
= \frac{(\sqrt{x} + \sqrt{y})^2}{x+y} = \frac{x+y + 2 \sqrt{xy}}{x+y} \geq 1 \ .
\end{align*}
This shows that $f_{H}(x,y) \leq f_{\chi}(x,y)$.
\end{proof}

We then recall Claim \ref{claim:hg} bounding $h(x,y,\omega)$ in terms of $f_{H}(x,y)$ as follows.
\begin{claim}\label{claim:hgtr}
For all $x,y\in [0,1],\omega\in \reals$, 
$h(x,y,\omega)\leq f_{H}(x,y) (1+2 |\omega|)^2$.
\end{claim}

These claims allow us to bound the variance:
\begin{eqnarray*}
\var[T]\leq E[T^2] = \int_{-\infty}^{\infty} (h(x,y,\w))^2 \kappa_{\chi}(\omega) d\w
&\leq & f_{H}(x,y)^2 \int_{-\infty}^{\infty} (1+2 |\omega|)^4  \kappa_{\chi}(\omega) d\w 
\\ 
& =&  f_{H}(x,y)^2 \cdot 22.77 \enspace < \enspace 23 f_{\chi}(x,y)^2,
\end{eqnarray*}
%which follows from Claim~\ref{claim:fg} and Claim~\ref{claim:hg}
\end{proof}
This naturally gives rise to the following algorithm.
\begin{algorithm}
\caption{Embeds point $p \in \Delta_{d}$ under $\chi^2$ into $\ell_2^2$.}
\label{algo:embedbtr}
\DontPrintSemicolon % Some LaTeX compilers require you to use \dontprintsemicolon instead
\KwIn{$p=\{p_1, \ldots, p_d\}$. }
\KwOut{A vector $a^p$ of length $O \left( n^2 d^3 \eps^{-2}\right)$}
$\ell \gets 1$; $s \gets \roundup{23 n^2 d^2 \eps^{-2}}$ \; \\
\For{$j \gets 1 $ \textbf{to} $s$} {
	$\w_j \gets$  a draw from $\kappa_{\chi}(\w)$;
}
\For{$i \gets 1$ \textbf{to} $d$} {
	\For{$j \gets 1 $ \textbf{to} $s$} {	
  		$a^p_{\ell} \gets \left( \sqrt{p_i} \cos (\w_j \ln p_i)/s \right)$ \;  \\
		$b^p_{\ell} \gets \left( \sqrt{p_i} \sin (\w_j \ln p_i) /s \right)$ \;  \\
	 	$\ell \gets \ell +1$ \;  \\
	}
}
\Return{$a^p$ concatenated with $b^p$.}\;
\end{algorithm}
Let $\omega_1, \ldots, \omega_t$ be $t$ independent samples chosen according to $\kappa_{\chi}(\omega)$. For any distribution $p$ on $[d]$, define vectors $\vecv^p, \vecu^p \in \reals^{td}$ where, for $i\in [d], j\in [t]$,
\[\vecv^p_{i,j}= \sqrt{p_i} \cdot \cos ( \omega_j \ln p_i)/t,
\quad
\vecu^p_{i,j}= \sqrt{p_i} \cdot \sin ( \omega_j \ln p_i)/t. \]
Let $\vecv^p_{i}$ be a concatenation of $\vecv^p_{i,j}$ and 
$\vecu^p_{i,j}$ over all $j\in [t]$.
Then note that $E[ \|\vecv^p_{i}-\vecv^q_{i}\|^2_2] =f_{\chi}(p_i,q_i)$  and $\var[\|\vecv^p_{i}-\vecv^q_{i}\|^2_2] \leq 23(f_{\chi}(p_i,q_i))^2/t.$
Hence, for $t=23 n^2 d^2 \eps^{-2}$, by an application of the Chebyshev bound,
\begin{equation}\label{eq:chebytr}
\Pr[| \|\vecv^p_{i}-\vecv^q_{i}\|^2_2-f_{\chi}(p_i,q_i)|\geq \eps f_{\chi}(x,y)]
\leq 23 \eps^{-2}/t= (nd)^{-2}.
\end{equation}
By an application of the union bound over all pairs of points:
\[
\Pr[\exists i\in [d]~,~  p,q \in P |\|\vecv^p_{i}-\vecv^q_{i}\|^2_2-f_{\chi}(p_i,q_i)|\geq \eps f_{\chi}(p_i,q_i)]
\leq 1/d.
\]
And hence, if $\vecv^p$ is a concatenation of $\vecv^p_i$ over all 
$i\in [d]$, then with probability at least $1-1/d$, 
\[
(1-\eps ) \chi^2(p,q)\leq \|\vecv^p-\vecv^q\|\leq (1+\eps ) \chi^2(p,q). 
\]
The final length of the vectors is then $td = 23 n^2 d^3 \eps^{-2}$ for approximately preserving distances between every pair of points with probability at least $1 - \frac{1}{d}$.  This can be reduced further to $O(\log n / \eps^2)$ by simply applying the JL-Lemma.
\section{Dimensionality reduction}
The JL-Lemma has been instrumental for improving the speed and approximation ratios of learning algorithms.  In this section, we give a proof of the JL-analogue for well behaved $f$-divergences. Specifically, we show that  a set of $n$ points lying on a high dimensional simplex can be embedded to a $k= O(\log n /\eps^2)$-dimensional simplex, while approximately preserving the information distances between all pairs of points. This dimension reduction amounts to reducing the support of the distribution from $d$ to $k$, while approximately maintaining the divergences. 

Our proof uses $\ell_2^2$ as an intermediate space. On a high level, we first embed the points into a high (but finite) dimensional $\ell_2^2$ space, using the techniques we developed in Section \ref{sec:multiplicative}. We then use the Euclidean JL-Lemma to reduce the dimensionality, and remap the points into the interior of a simplex. Finally, we show that far away from the simplex boundaries, the well behaved $f$-divergences have the same structure as $\ell_2^2$, hence the embedding back into information spaces can be done with a simple translation and rescaling. Note that for $f$-divergences that have an embedding into finite dimensional $\ell_2^2$, the proof is constructive. 

%We construct now our dimensionality reduction result for well-behaved $f$-divergences. On a high-level the approach is to use the $1+\eps$ multiplicative approximation of Section \ref{sec:multiplicative} to embed into a finite dimensional $\ell_2^2$ space. Then we use the Euclidean JL--Lemma to project down to low dimensions, map to the interior of the simplex and then observe that an identity mapping with low-distortion holds for all well-behaved $f$-divergences in this region. The result holds existentially for all well-behaved $f$-divergences, but is easily made constructive for all $f$-divergences which have an explicit embedding into finite dimensions with $1+\eps$-distortion or less, as in the case of JS or the Hellinger distance.

\begin{algorithm}
\caption{\sc{Dimension Reduction for $D_f$}}% embeds point $p \in \Delta_{d-1}$  into $\bar{p} \in \Delta_k$.}
\label{algo:embedc}
\DontPrintSemicolon % Some LaTeX compilers require you to use \dontprintsemicolon instead
\KwIn{Set $P=\{p_1, \ldots, p_n\}$ of points on $\Delta_{d}$, error parameter $\eps$, constant $c_0(\eps, f)$ }
\KwOut{A set $\bar{P}$ of points on $\Delta_{k}$ where  $k=O \left(  \frac{\log n}{\eps^2} \right)$}
\begin{enumerate}\itemsep=0in
\item Embed $P$ into $\ell_2^2$ to obtain $P_1$ with error parameter $\nicefrac{\eps}{4}$. % (Algorithm \ref{algo:embedb} in case of JS) with $\frac{\eps}{3}$ error to $P$ to a get a set $P_1 \in \ell_2^2$ \; 
\item Apply Euclidean JL--Lemma with error $\frac{\eps}{4}$ to obtain $P_2$ in dimension $k = O \left( \frac{\log n}{\eps^2} \right)$ 
\item Remap $P_2$ to the plane $L=\{ x \in \reals^{k+1}  \mid \sum_i x_i =0 \}$  to obtain $P_3$ 
\item Scale $P_3$ to a ball of radius $c_0 \cdot \frac{\eps}{k+1}$ and center at the centroid of $\Delta_{k+1}$ to obtain $\bar{P}$.
%\tcp*{$c_0$ is a sufficiently small constant dependent on the specific $1$-dimensional $f$-divergence, and may be found by 
%experimentation.}
\end{enumerate}
\end{algorithm}

To analyze the above algorithm, we recall the JL--Lemma~\citep{jl, lugosi}:
\begin{lemma}\label{lem:jl}
For any set of points $P$ in a (possibly infinite dimensional) Hilbert space $H$, there exists a randomized map $f \colon H \to \reals^k$, $k = O(\frac{\log n}{\eps^2})$ such that $\forall p, q \in P$ 
\begin{equation}
(1-\eps) \|p-q \|_2^2 \leq \|f(p) - f(q) \|_2^2 \leq (1+\eps) \|p-q \|_2^2,
\end{equation}
with high probability. 
\end{lemma}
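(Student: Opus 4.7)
The plan is to give the now-standard random projection proof, since this is the classical Johnson--Lindenstrauss result. First I would reduce to the finite-dimensional setting: because $|P|=n$, the set of pairwise difference vectors $\{p-q : p,q \in P\}$ spans a subspace $V \subseteq H$ of dimension at most $n-1$, so choosing any orthonormal basis of $V$ isometrically identifies the relevant differences with elements of $\reals^n$. It therefore suffices to prove the lemma with $H=\reals^n$.

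Next, I would take $f(x)=Ax$, where $A\in\reals^{k\times n}$ has i.i.d.\ entries $A_{ij}\sim\mathcal{N}(0,1/k)$. For any fixed nonzero $v\in\reals^n$, each coordinate of $Av/\|v\|_2$ is an independent $\mathcal{N}(0,1/k)$ random variable, so the normalized squared norm $Z:=\|Av\|_2^2/\|v\|_2^2$ satisfies $kZ\sim\chi^2_k$. In particular $\mathbb{E}[Z]=1$, and the standard chi-squared tail bound yields
\[
\Pr\bigl[\,|Z-1|\ge\eps\,\bigr]\ \le\ 2\exp(-c\eps^2 k)
\]
for an absolute constant $c>0$ and $\eps\in(0,1/2)$; this is immediate from the Laplace transform of a $\chi^2_k$ variable together with a Chernoff argument.

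Finally I would apply this estimate with $v=p-q$ for each of the $\binom{n}{2}$ point pairs $p,q\in P$ and take a union bound. The total failure probability is at most $n^2\exp(-c\eps^2 k)$, which for $k = C\eps^{-2}\log n$ with $C$ a sufficiently large absolute constant becomes at most $1/n$. On the complementary event, every squared pairwise distance $\|p-q\|_2^2$ is preserved to within a factor $(1\pm\eps)$, which is exactly the stated conclusion.

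There is no real obstacle here, since the result is classical and the paper evidently invokes it as a black box. If one preferred to avoid Gaussians (e.g., for derandomization or sparsity), one could replace $A$ by a random $\pm 1/\sqrt{k}$ sign matrix and appeal to the subgaussian concentration of $\|Av\|_2^2$, with essentially identical bookkeeping. The nontrivial content of the paper's surrounding theorem lies not in reproving this lemma but in steps~3--4 of Algorithm~\ref{algo:embedc}, where the Euclidean image must be translated and rescaled back into the simplex while staying in a region where the well-behaved $f$-divergence locally mimics $\ell_2^2$.
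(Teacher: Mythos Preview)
Your proposal is correct and is the standard Gaussian random projection argument for the Johnson--Lindenstrauss lemma. The paper does not prove this lemma at all: it simply recalls it with a citation to \cite{jl,lugosi} and uses it as a black box, exactly as you surmised, so there is nothing further to compare.
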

We show the following simple corollary:
\begin{corollary}\label{cor:jl}
For any set of points $P$ in $H$ there exists a constant $t$ and a randomized map $f \colon H \to \Delta_{k+1}$, $k = O(\frac{\log n}{\eps^2})$  such that $\forall p, q \in P$:
%for all $p, q \in P$ with high probability ( at least $ 1 - 1/ \text{poly}(n)$):
\begin{equation}
(1-\eps) \|p-q \|_2^2 \leq t \|f(p) - f(q) \|_2^2 \leq (1+\eps) \|p-q \|_2^2,
\end{equation}
Furthermore for any small enough constant $r$, we may bound the domain of $f$ to be a ball $B$ of radius $r$ centered at the simplex centroid, $(\nicefrac{1}{k+1}, \nicefrac{1}{k+1}, \ldots, \nicefrac{1}{k+1})$ . 
\end{corollary}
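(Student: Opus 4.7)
\begin{proofsketch}
The plan is to post-process the standard JL map by a sequence of distance-preserving or uniformly-rescaling operations so that the image lands inside a prescribed ball around the simplex centroid, and then to absorb the single scaling factor into the constant $t$.

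First, I would invoke Lemma~\ref{lem:jl} to obtain a randomized map $g\colon H\to\reals^k$ with $k=O(\log n/\eps^2)$ such that $(1-\eps)\|p-q\|_2^2 \le \|g(p)-g(q)\|_2^2 \le (1+\eps)\|p-q\|_2^2$ for all $p,q\in P$ with high probability. Next, I would compose $g$ with a linear isometry $\iota\colon\reals^k\to L\subset\reals^{k+1}$, where $L=\{x\in\reals^{k+1}\mid \sum_i x_i=0\}$ is the hyperplane orthogonal to the all-ones vector; since $L$ is a $k$-dimensional Euclidean subspace, such an isometry exists and preserves squared distances exactly. Let $c=(\nicefrac{1}{k+1},\ldots,\nicefrac{1}{k+1})$ denote the centroid of $\Delta_{k+1}$. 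The affine plane containing $\Delta_{k+1}$ is $c+L$, so translating by $c$ maps $L$ into this affine plane.

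Now let $R=\max_{p,q\in P}\|\iota(g(p))-\iota(g(q))\|_2$; this quantity is finite and, up to the multiplicative JL distortion, comparable to the diameter of $P$ in $H$. Pick any target radius $r$ small enough that the Euclidean ball of radius $r$ around $c$ in the affine plane $c+L$ is contained in $\Delta_{k+1}$ — this is possible because the inradius of $\Delta_{k+1}$ (viewed inside $c+L$) is strictly positive, and in particular $r$ can be taken to be a small constant times $\nicefrac{1}{\sqrt{k(k+1)}}$. Define
\[
f(p) \;=\; c + \frac{r}{R}\,\iota(g(p)).
\]
By construction the image lies in a ball of radius $r$ around $c$ inside $c+L$, hence inside $\Delta_{k+1}$. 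Setting $t=(R/r)^2$, the pairwise squared distances satisfy
\[
t\,\|f(p)-f(q)\|_2^2 \;=\; \|\iota(g(p))-\iota(g(q))\|_2^2 \;=\; \|g(p)-g(q)\|_2^2,
\]
and the JL guarantee on $g$ yields the desired sandwich $(1-\eps)\|p-q\|_2^2 \le t\|f(p)-f(q)\|_2^2 \le (1+\eps)\|p-q\|_2^2$. For the final ``furthermore'' clause, restricting the domain of $f$ to a small ball $B$ of radius $r'$ around $c$ is immediate: $f$ is an affine map, so its image on any such $B$ is again contained in a scaled ball around $c$, and by shrinking the scaling factor further (still a uniform rescaling, absorbed into $t$) we can guarantee containment in $\Delta_{k+1}$.

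The only mild obstacle is the choice of $r$: it must be small enough that the ball around $c$ sits inside $\Delta_{k+1}$, which constrains $r$ in terms of $k$, but since we rescale uniformly this only affects the hidden constant $t$ and not the distortion. Everything else is a composition of isometries with a single global rescaling, so the JL distortion is transported verbatim.
\end{proofsketch}
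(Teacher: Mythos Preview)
Your proof is correct and follows essentially the same route as the paper's: apply the JL map into $\reals^k$, isometrically embed into the hyperplane $L=\{x\in\reals^{k+1}\mid\sum_i x_i=0\}$, then scale and translate to land in a small ball around the centroid of $\Delta_{k+1}$, absorbing the scaling into the constant $t$. Your version is simply more explicit (giving the formula $f(p)=c+(r/R)\,\iota(g(p))$ and $t=(R/r)^2$) than the paper's three-sentence sketch; the only slip is your last paragraph's talk of restricting the \emph{domain} to a ball around $c$---the domain is $H$, and what the corollary (despite its own wording) really needs is that the \emph{image} lies in such a ball, which your main construction already guarantees.
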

\begin{proof}
Consider first the map of Lemma~\ref{lem:jl} from $\reals^d \to \reals^k$ . Now note that any set of points in $R^k$ can be isometrically embedded into the hyperplane $L=\{ x \in \reals^{k+1}  \mid \sum_i x_i =0 \}$. This follows by remapping the basis vectors of  $\reals^k$ to those of $L$. Finally since $L$ is parallel to the simplex plane, the entire pointset may be scaled by some factor
$t$ and then translated to fit in $\Delta_{k+1}$, or indeed in any ball of radius $r$ centered at the simplex centroid.
\end{proof}
We now show that any well-behaved $f$ divergence is nearly Euclidean near the simplex centroid.
\begin{lemma}\label{lem:identity}
Consider any well-behaved $f$ divergence $D_f$, and let $B_r$ be a ball of radius $r$ such that $B_r \subset \Delta_k$ and $B_r$ is centered at the simplex centroid. Then for any fixed $0<\eps<1$, there exists a choice of $r$ and scaling factor $t$ (both dependent on $k$) such that $\forall p,q \in B$:
\begin{equation}
(1- \eps) \|p-q \|_2^2 \leq t D_f(p,q) \leq (1+ \eps) \|p-q \|_2^2.
\end{equation} 
\end{lemma}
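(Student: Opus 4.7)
The plan is to Taylor-expand $f$ around $u = 1$ and exploit the well-behaved conditions $f(1) = f'(1) = 0$, $f''(1) > 0$, together with the existence of $f'''(1)$ (hence a uniform bound $M$ on $|f'''|$ in some neighborhood of $1$). This gives $f(u) = \tfrac{f''(1)}{2}(u-1)^2 + R(u)$ with $|R(u)| \le \tfrac{M}{6}|u-1|^3$. Substituting $u = q_i/p_i$, multiplying by $p_i$, and summing yields
\[ D_f(p,q) = \frac{f''(1)}{2} \sum_i \frac{(q_i - p_i)^2}{p_i} + \sum_i p_i R(q_i/p_i), \]
so the strategy reduces to showing that near the centroid the leading sum is essentially $k \|p - q\|_2^2$ and the cubic remainder is a lower-order term.

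For $p, q \in B_r$ centered at $c = (1/k, \ldots, 1/k)$, each $p_i, q_i$ lies in $[1/k - r, 1/k + r]$, so $|q_i - p_i| \le 2r$ and, provided $r \le 1/(2k)$, each $p_i \ge 1/(2k)$. Consequently $1/p_i = k(1 + O(rk))$ uniformly in $i$, which gives $\sum_i (q_i - p_i)^2 / p_i = k\|p - q\|_2^2 \, (1 + O(rk))$. With the scaling $t = 2/(f''(1)k)$ this already matches $\|p - q\|_2^2$ up to a $1 \pm O(rk)$ factor, so the choice of $t$ is forced by matching the leading Taylor coefficient.

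I would then control the remainder by pulling out one factor of $|q_i - p_i| \le 2r$ and using $1/p_i^2 \le 4k^2$:
\[ \sum_i p_i |R(q_i/p_i)| \;\le\; \frac{M}{6} \sum_i \frac{|q_i - p_i|^3}{p_i^2} \;\le\; \frac{4Mrk^2}{3}\|p - q\|_2^2. \]
After multiplying by $t = 2/(f''(1)k)$ this contributes a \emph{relative} error of order $Mrk/f''(1)$. Choosing $r = c_0 \eps/(k+1)$ with $c_0$ a sufficiently small positive constant depending only on $f''(1)$ and $M$ forces both the $O(rk)$ distortion from the leading term and this remainder error to fall below $\eps$, yielding the claimed two-sided inequality and matching the radius used in Algorithm~\ref{algo:embedc}.

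The main obstacle is not any individual estimate but the balance of scales: both $1/p_i$ and the cubic remainder deteriorate as $p_i \to 0$, so the radius $r$ is forced to shrink like $\eps/k$ rather than being a $k$-independent constant; this is the source of the $1/(k+1)$ factor in the algorithm. A secondary subtlety is that the uniform cubic bound $M$ requires the Taylor neighborhood and the ball of ratios $\{q_i/p_i\}$ to be compatible, which is why one must verify that $|q_i/p_i - 1| \le 4rk$ is small enough to stay inside whatever neighborhood of $1$ on which $|f'''|$ is bounded; this is again absorbed into the choice of $c_0$, which must depend on the specific $f$-divergence.
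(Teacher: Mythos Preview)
Your proposal is correct and follows essentially the same route as the paper's proof: Taylor-expand $f$ at $1$, use $f(1)=f'(1)=0$ to isolate the quadratic leading term $\tfrac{f''(1)}{2}\sum_i (q_i-p_i)^2/p_i$, observe that near the centroid this equals $\tfrac{k f''(1)}{2}\|p-q\|_2^2$ up to a $(1\pm O(rk))$ factor, and kill the cubic remainder by taking $r\sim \eps/k$. One small caveat: the hypothesis is only that $f'''(1)$ exists, which does \emph{not} imply a bound on $|f'''|$ in a neighborhood as you assert; however, the cubic bound $|R(u)|\le C|u-1|^3$ that you actually use still follows from the Peano form of the remainder (this is exactly how the paper writes it, as $(x-1)^3\phi(x)$ with $\phi$ bounded near $1$), so your argument goes through unchanged once you adjust that one line of justification.
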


\begin{proof}
We consider arbitrary $p$, $q \in B_r$ and note that the assumptions imply each coordinate lies in the interval $I =[\frac{1}{k} -r, \frac{1}{k}+r]$.  Let $rk = \eps'$, then $I = [\frac{1 -\eps'}{k} , \frac{1+\eps'}{k}]$. We now prove the lemma for $p,q \in I$, the main result follows by considering $D_f$ and $\|\cdot\|^2$ coordinate by coordinate. 

By the definition of well-behaved $f$-divergences and Taylor's theorem, there exists a neighborhood $N$ of $1$, and function $\phi$ with $\lim_{x \to 1}\phi(1) = 0$  such that for all $x \in N$:
\begin{equation}\label{eq:secondapprox}
f(x) = f(1) + (x-1)f'(1) + \frac{(x-1)^2}{2} f''(1) + (x-1)^3 \phi(x)
= \frac{(x-1)^2}{2} f''(1) + (x-1)^3 \phi(x).
\end{equation}
Therefore:
\begin{align*}
\frac{D_f(p,q)}{\|p-q \|_2^2} 
&= \frac{p \cdot f\left(  \frac{q}{p}  \right)}{(p-q)^2}  \\ 
%&=  \frac{p f  \left(  1 + \bar{r}/p  \right)}{\bar{r}^2} \\
&=   \frac{p  \left(
\left(\frac{q - p}{p}\right)^2 \frac{f''(1)}{2} 
+ \left(\frac{q - p}{p} \right)^3 \phi\left( \frac{q}{p} \right) 
\right)}{(q - p)^2} \\
&=     \frac{f''(1)}{2p}   + \frac{q - p}{p^2} \phi\left(\frac{q}{p} \right). 
\end{align*}
Recall again that $p \in [\frac{1 -\eps'}{k} , \frac{1+\eps'}{k}]$ so the first term converges to the constant $2k f''(1)$ as $r$ grows smaller (and hence $\eps'$ decreases). Note also that the second term goes to $0$ with $r$, i.e., given  
a suitably small choice of $r$ we can make the term smaller than any desired constant. Hence, for every dimension $k$ and  $0 <\eps < 1$, there exists a radius of convergence $r$ such that for all $p,q \in B_r$:
\begin{equation}
(1- \eps) \|p-q \|_2^2  \leq \frac{1}{2k f''(1)} D_f(p,q) \leq (1+\eps)\|p-q \|_2^2. 
\end{equation}
\end{proof}
We note that the required value of $r$ can be computed easily for the Hellinger and $\chi^2$ divergence, and that $r$ behaves as 
$\frac{1}{k} \cdot c$ where $c = c(f, \epsilon)$ is a constant depending only on $\eps$ and the function $f$ and not on $k$ or $n$ .

To conclude the proof note that the overall distortion is bounded by the combination of errors due to the initial embedding into $P_1$, the application of JL-Lemma, and the final reinterpretation of the points in $\Delta_{k+1}$. The overall error is thus bounded by,  $(1 + \nicefrac{\eps}{4})^3 \leq 1+\eps$. 
%
%The proof of the embedding is now simple. Given point set $P$, we first apply any known low-distortion embedding into $\ell_2^2$ (for instance, Algorithm~\ref{algo:embedb} for JS) to obtain a $\left(1 + \frac{\eps}{3} \right)$-distortion embedding from $(P, D_f)$ to $(P_1, \ell_2^2)$ in some finite dimensional Hilbert space.
%
% We then apply Corollary \ref{cor:jl} with $\eps_1 = \eps/3$ to embed $(P_1, \ell_2^2)$ to a ball $B_r$ of radius $r$ centered at the simplex centroid, where $B_r \subset \Delta_{k+1}$ and $k = O \left( \frac{\log n}{\eps^2} \right)$ and $r$ is appropriately small arbitrarily small.  Finally we apply the identity mapping of Lemma \ref{lem:identity} to embed $(P_1, \ell_2^2)$ to $(P_2, D_f)$ with distortion $\eps_2 = \eps/3$. This discussion concludes in our overall result:

\begin{theorem}
Consider a set $P \in \Delta_d$ of $n$ points under a well-behaved $f$-divergence $D_f$. Then there exists a $(1+ \eps)$ distortion embedding of $P$ into $\Delta_{k}$ under $D_f$ for some choice of $k$ bounded as $ O \left( \frac{\log n}{\eps^2} \right)$. Furthermore this embedding can be explicitly computed even for a well-behaved $f$-divergence with an infinite dimensional kernel, if the kernel can be approximated in finite dimensions within a multiplicative error as we show for JS and $\chi^2$.
\end{theorem}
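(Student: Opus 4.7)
The plan is to assemble the final theorem from three ingredients that are already proved or set up earlier in the paper: a multiplicative embedding of $D_f$ into $\ell_2^2$, the Euclidean JL--Lemma (in the simplex-friendly form of Corollary \ref{cor:jl}), and Lemma \ref{lem:identity} which says that inside a small enough ball around the simplex centroid a well-behaved $f$-divergence agrees with $\ell_2^2$ up to a fixed scale factor. Algorithm \ref{algo:embedc} is exactly the composition of these three steps, so my task reduces to verifying that the errors compose and that each step is applicable with the stated parameters.

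First I would invoke the regularity of $D_f$ (or in the JS/$\chi^2$ case the randomized kernel construction of Section \ref{sec:multiplicative}) to obtain a map $\Phi \colon P \to \ell_2^2$ with multiplicative distortion $1+\eps/4$. For $f$-divergences with a finite dimensional kernel the map is directly the feature map; for JS and $\chi^2$, I would run Algorithm \ref{algo:embedb} (respectively \ref{algo:embedbtr}) with error parameter $\eps/4$, which yields an explicit finite-dimensional realization preserving all pairwise $D_f$ values up to $(1\pm\eps/4)$ with high probability. The only requirement I am using is that the well-behaved kernel can be approximated in finite dimensions multiplicatively, which is exactly the hypothesis of the ``Furthermore'' clause.

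Second I would apply Corollary \ref{cor:jl} with error $\eps/4$ to $\Phi(P)$. This produces a randomized map $f$ landing inside a ball $B_r$ of radius $r$ centered at the centroid of $\Delta_{k+1}$, with $k=O(\log n/\eps^2)$, such that pairwise squared Euclidean distances are preserved up to $(1\pm\eps/4)$ after a fixed global rescaling $t$. Here I would choose $r=c_0\cdot \eps/(k+1)$ small enough so that the hypothesis of Lemma \ref{lem:identity} is satisfied; because the radius of convergence in Lemma \ref{lem:identity} behaves like $c/k$ for a constant $c=c(f,\eps)$ independent of $n$, such an $r$ exists and is compatible with the scaling step of Algorithm \ref{algo:embedc}.

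Finally I would use Lemma \ref{lem:identity} to translate squared-Euclidean distances inside $B_r$ back into $D_f$ values, incurring another $(1\pm\eps/4)$ factor after appropriate rescaling by $1/(2kf''(1))$. Composing the three multiplicative errors yields a total distortion of $(1+\eps/4)^3\le 1+\eps$ for all $0<\eps<1$, which is the claim. The main obstacle I anticipate is the compatibility between the scale parameter $t$ from Corollary \ref{cor:jl} and the scale parameter demanded by Lemma \ref{lem:identity}: one must verify that one can simultaneously fit the projected point set inside $B_r$ and apply the lemma's rescaling; but since both scalings are global constants depending only on $k$, $f$, and $\eps$, they can be absorbed into a single final affine transformation of the point set, which is precisely what step 4 of Algorithm \ref{algo:embedc} does. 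The constructivity claim then follows because each of the three steps is explicit whenever the $\ell_2^2$ embedding in step~1 is explicit, which is the case for any finite-dimensional kernel and, via Algorithms \ref{algo:embedb} and \ref{algo:embedbtr}, for JS and $\chi^2$.
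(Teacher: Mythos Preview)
Your proposal is correct and follows essentially the same route as the paper: embed $D_f$ into $\ell_2^2$ with error $\eps/4$ (via the feature map or, for JS/$\chi^2$, the randomized kernel sampling), apply Corollary~\ref{cor:jl} with error $\eps/4$ to land in a small ball about the centroid of $\Delta_{k+1}$ with $k=O(\log n/\eps^2)$, and then invoke Lemma~\ref{lem:identity} to reinterpret $\ell_2^2$ as $D_f$ with a further $(1\pm\eps/4)$ factor, giving total distortion $(1+\eps/4)^3\le 1+\eps$. Your explicit discussion of the compatibility of the two scale parameters is a bit more careful than the paper, but the argument and its ingredients are identical.
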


\section{Conclusions}
\label{sec:conclusions}

The embedding and sketching results we show here complements the known impossibility
results for sketching information distances  in the strict turnstile model, thus providing a more
complete picture of how these distances can be estimated in a stream. The
dimensionality reduction result essentially says that as long as the information
distance admits a ``Euclidean-like'' patch somewhere in the simplex, it can be
mapped to a lower dimensional space. This latter result is a little surprising
because the Hellinger distance exhibits more $\ell_1$ like behavior at the
corners of the simplex. In fact, we conjecture that if we limit ourselves to mappings that are not
contractive, then it is likely that the Hellinger distance will \emph{not} admit
accurate dimensionality reduction. 
\newpage
\bibliography{js}

\newpage
\end{document}